\newcommand{\CC}{\mathbb{C}}
\newcommand{\II}{\mathbb{I}}
\newcommand{\Aa}{\mathcal{A}}
\newcommand{\Bb}{\mathcal{B}}
\newcommand{\Cc}{\mathcal{C}}
\newcommand{\Gg}{\mathcal{G}}
\newcommand{\id}{\mathbb{I}}
\newcommand{\diag}{\mathrm{diag}}
\newcommand{\true}{\mathsf{True}}
\newcommand{\false}{\mathsf{False}}
\newcommand{\Val}{\mathsf{Val}}
\newcommand{\defeq}{\mathrel{\overset{\makebox[0pt]{\mbox{\normalfont\tiny\sffamily def}}}{=}}}
\newcommand{\ETH}{\mathsf{ETH}}
\renewcommand{\exp}{\mathsf{exp}}
\newcommand{\NIC}{\mathsf{NIC}}
\newcommand{\QCSAT}{\mathsf{QCSAT}}
\newcommand{\junk}{\mathsf{junk}}
\newcommand{\qubit}{\CC^{2}}
\newcommand{\qubits}[1]{(\qubit)^{\otimes #1}}
\newcommand{\yes}{\mathsf{yes}}
\newcommand{\no}{\mathsf{no}}
\newcommand{\be}{\begin{equation}}
\newcommand{\ee}{\end{equation}}
\newcommand{\ba}{\begin{array}}
\newcommand{\ea}{\end{array}}
\newcommand{\bea}{\begin{eqnarray}}
\newcommand{\eea}{\end{eqnarray}}
\newcommand{\ra}{\rangle}
\newcommand{\la}{\langle}
\newcommand{\calH}{{\cal H }}
\renewcommand{\eqref}[1]{\textrm{eq.~}(\ref{#1})}
 \newtheorem{theorem}{Theorem}
\newtheorem{definition}[theorem]{Definition}
\newtheorem{lemma}[theorem]{Lemma}
\newtheorem{fact}[theorem]{Fact}
\newenvironment{proof}{\noindent{\bf Proof:} \hspace*{1mm}}{
    \hspace*{\fill} $\Box$ }
\newenvironment{proof_of}[1]{\noindent {\bf Proof of #1:}
    \hspace*{1mm}}{\hspace*{\fill} $\Box$ }
\newenvironment{xalign}{\subequations\align}{\endalign\endsubequations}
\title{\textbf{The Parameterized Complexity of Quantum Verification}}
\author{
Srinivasan Arunachalam$^{1}$\thanks{\texttt{srinivasan.arunachalam@ibm.com}} \quad 
Sergey Bravyi$^{1}$\thanks{\texttt{sbravyi@us.ibm.com}} \quad 
Chinmay Nirkhe$^{12}$\thanks{\texttt{nirkhe@cs.berkeley.edu}} \quad 
Bryan O'Gorman$^{1}$\thanks{\texttt{bryan.ogorman@ibm.com}} \\ \ \\
\small $^{1}$IBM Quantum, Thomas J Watson Research Center, Yorktown Heights, New York 10598, USA \\
\small $^{2}$Electrical Engineering and Computer Sciences, University of California, Berkeley 94720, USA
}
\date{}
\begin{document}
\maketitle{}

\begin{abstract}
We initiate the study of parameterized complexity of $\QMA$ problems in terms of the number of non-Clifford gates in the problem description. We show that for the problem of parameterized quantum circuit satisfiability, there exists a classical algorithm solving the problem with a runtime scaling exponentially in the number of non-Clifford gates 
but only polynomially with the system size. This result follows from our main result, that for any Clifford + $t$ $T$-gate quantum circuit satisfiability problem, the search space of optimal witnesses can be reduced to a stabilizer subspace isomorphic to at most $t$ qubits (independent of the system size). Furthermore, we derive new lower bounds on the $T$-count of circuit satisfiability instances and the $T$-count of the $W$-state assuming the classical exponential time hypothesis ($\ETH$). Lastly, we explore the parameterized complexity of the quantum non-identity check problem.
\end{abstract}

\section{Introduction}

The solutions to many important computational problems require resources that seem to scale exponentially in the system size in the worst case.
Parameterized complexity refines this phenomenon by identifying one or more parameters along with algorithms that scale exponentially only in the identified parameters (but polynomially in system size). 
In the worst case, these parameters typically scale with system size, but often interesting instances have an intermediate value of the parameter.
In quantum computing, parameterized complexity has been applied to the classical simulation of quantum systems \cite{garcia-ramirez_2014,PhysRevX.6.021043,bravyi2016improved,ogorman:LIPIcs:2019:10402,Bravyi2019simulationofquantum,best-stab-bound}. 
In particular, parameterizing circuits by the number of non-Clifford gates has yielded many state-of-the-art algorithms for classical simulation.
Here, we initiate the study of the parameterized complexity of \emph{non-deterministic} computation, i.e., \emph{quantum verification}. Specifically, we consider $\QMA$ (Quantum Merlin-Arthur\footnote{The complexity class $\QMA$ is the quantum analog of the classical  non-deterministic complexity classes, $\MA$ and $\NP$~\cite{kitaev2002classical}.}) problems parameterized by the number of non-Clifford gates in their verification circuits and obtain non-trivial upper bounds on the classical complexity of finding an optimal witness and the number of qubits required for its~representation.

\subsection{The parameterized complexity of quantum circuit satisfiability}

The first problem we consider is quantum circuit satisfiability ($\QCSAT$), a canonical $\QMA$-complete problem. In a $\QCSAT$ instance, the input is an $s$-gate quantum circuit $U$ acting on $n + m$ qubits followed by the measurement in the standard basis of any $k > 0$ output qubits.\footnote{While one can map the $k$-qubit measurement to a single measurement, this requires the application of a coherent AND logical gate. Implementing this AND gate requires $\Omega(k)$ non-Clifford gates; in the case of parameterized complexity, this cost may be prohibitive. For this reason, we  define the problem in terms of the measurement of multiple qubits in the standard~basis.} The goal in the $\QCSAT$ problem is to estimate the maximal probability that quantum circuit measurement outputs~$1^k$ when run on input states (i.e.,  witnesses) of the form $\ket{\psi} \otimes \ket{0^m}$ for~$\ket{\psi} \in~\qubits{n}$:
\begin{equation}
    \Val \defeq \max_{\ket \psi \in \qubits{n}} \bra{\psi, 0^{m}} U^\dagger \ketbra*{1^k} U \ket{\psi, 0^{m}} \label{eq:qcsat-def}.
\end{equation}
The problem can also be phrased as a promise decision problem in which the goal is to decide if $\Val > a$ (yes instance) or $\Val < b$ (no instance) for $a > b$. The decision problem is known to be $\QMA$-complete when $a = 2/3$ and $b = 1/3$.

To the best of our knowledge, there are no previously known classical algorithms that exploit the structure of the circuit to solve $\QCSAT$ in less than $\exp(n)$ time; a simple exponential time algorithm for calculating $\Val$ can be achieved by searching over the entire Hilbert space of the witness $\ket{\psi}$. One of the main results of our work is that parameterized $\QCSAT$ instances with $t$ $T$-gates for $t \ll n$ can be solved significantly faster than this naive algorithm.
We show that there is a stabilizer subspace isomorphic to $\qubits{t}$ of the the input Hilbert space which contains all optimal witnesses; here an optimal witness is any input state $\psi$ that maximizes the
probability of observing the output $1^k$.

\begin{restatable}{theorem}{thmmain}
    For every $\QCSAT$ instance $U$ with $t \leq n$ T-gates, there is an $n$-qubit Clifford unitary $W$ and a  $t$-qubit state $\ket{\phi}$ such that~$W \left(\ket{\phi} \otimes \ket{0^{n-t}}\right)$ is an optimal input state. Furthermore, the Clifford unitary $W$ only depends on the description of $U$ and can be computed in (classical, deterministic) time $\poly(n,s)$.
\label{thm:main}
\end{restatable}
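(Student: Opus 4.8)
The plan is to track the circuit $U$ gate-by-gate, maintaining a description of the reachable set of states as a Clifford image of a low-dimensional "non-Clifford register." Write $U = C_{t+1} T_{a_t} C_t \cdots T_{a_1} C_1$, where each $C_i$ is Clifford and each $T_{a_i}$ is a $T$-gate on wire $a_i$. The key structural observation is that each $T$-gate acts nontrivially only on a single qubit, so after commuting all Cliffords to the left, the computation $U$ can be rewritten as $U = D \cdot \bigl(\prod_{i=1}^{t} \tilde T_i\bigr)$, where $D$ is Clifford and each $\tilde T_i = P_i T P_i^\dagger$ is a $T$-gate conjugated by a Clifford, hence acts nontrivially only on the (single-qubit) support of the Pauli $P_i$. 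Collect the at-most-$t$ wires touched by $\tilde T_1, \dots, \tilde T_t$ into a register $R$ of $t' \le t$ qubits; by padding we may assume $t' = t$. On the complementary $n + m - t$ qubits the entire non-Clifford part acts as the identity, so we may absorb that Clifford action into $D$.

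The next step is to "pull" the ancilla initialization $\ket{0^m}$ and the final projector through. Since $\prod_i \tilde T_i$ acts as identity outside $R$, we can choose a Clifford basis change $V$ (depending only on the description of $U$, computable from the Pauli frames $P_i$ in time $\poly(n,s)$) so that $R$ is carried to the last $t$ qubits. Then the acceptance probability for witness $\ket\psi$ is
\begin{equation}
  \bra{\psi,0^m}\, U^\dagger \ketbra{1^k} U \,\ket{\psi,0^m}
  = \bra{\psi,0^m}\, V^\dagger \Pi'\, V \ket{\psi,0^m},
\end{equation}
where $\Pi'$ is a projector supported on $R$ together with whatever Clifford-image structure the original projector and ancillas impose. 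The point is that the dependence on $\ket\psi$ factors: $V\ket{\psi,0^m}$ is a stabilizer-coset deformation of $\ket\psi$, and maximizing the acceptance probability over $\ket\psi \in \qubits n$ is equivalent to maximizing a quadratic form that only "sees" the reduced state on the $t$-qubit register $R$. Hence there is an optimal witness of the form $W(\ket\phi \otimes \ket{0^{n-t}})$ with $\ket\phi \in \qubits t$ and $W$ Clifford.

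Concretely, the approach is: (i) normalize $U$ by commuting Cliffords, producing $t$ conjugated $T$-gates and their single-qubit Pauli frames $P_1, \dots, P_t$; (ii) argue that the span of $\{P_1, \dots, P_t\}$-supports can be brought by a Clifford $W$ (depending only on $U$) onto a canonical $t$-qubit block, so that all non-Clifford action, the residual Clifford action, the ancilla constraints, and the final measurement are expressed on the computational picture where the witness splits as a state on the first $t$ wires tensor $\ket{0^{n-t}}$; (iii) verify that the acceptance probability depends on $\ket\psi$ only through the projection onto this block, so the optimum is attained within it; (iv) check the classical runtime by bookkeeping the Clifford/Pauli updates, each of which is $\poly(n)$ per gate, for $s$ gates. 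The main obstacle I anticipate is step (ii)–(iii): the ancilla register $\ket{0^m}$ and the $k$-qubit measurement are \emph{not} inside the $t$-qubit block, and after the Clifford change of basis the "$\ket{0^{n-t}}$" part of the claimed optimal witness must be compatible simultaneously with the stabilizer group coming from the ancillas and with whatever constraints the projector imposes. Making these compatible — i.e., showing that one can always choose $W$ so that fixing the last $n-t$ witness qubits to $\ket 0$ loses nothing — is the crux, and I expect it requires a careful symplectic/linear-algebraic argument about how the Pauli frames $P_i$ interact with the stabilizer group of $\ket{0^m}$ and the logical structure of $\ketbra{1^k}$, possibly invoking a normal form for Clifford circuits acting on a stabilizer input with a stabilizer measurement.
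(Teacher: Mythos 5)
Your core intuition---that the non-Clifford ``action'' should be confined to a $t$-qubit register, after which a Clifford change of basis exposes the $\ket{0^{n-t}}$ part of the optimal witness---is the right target, but the mechanism you propose does not get you there, and you yourself flag the crux (steps (ii)--(iii)) as open. Two concrete problems.

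First, step (i) is already incorrect. Writing $T = e^{i\pi/8}e^{-i\pi Z/8}$, conjugating by a Clifford $C$ gives $C T_a C^\dagger = e^{i\pi/8} e^{-i\pi P/8}$ where $P = C Z_a C^\dagger$ is a Pauli of \emph{arbitrary weight}, not a single-qubit operator. So ``each $\tilde T_i$ acts nontrivially only on a single qubit'' is false; after commuting the Cliffords out, the $t$ Pauli rotations $e^{-i\pi P_i/8}$ can collectively touch all $n+m$ wires, the $P_i$ need not commute, and there is no Clifford that simultaneously maps them all onto a fixed $t$-qubit block. The premise that the non-Clifford part has an $O(t)$-qubit support simply does not hold under Clifford commutation.

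Second---and this is the part you correctly identify as the crux but leave unresolved---even granting some $t$-qubit ``magic'' register, you still need to show that the witness register factors, after a Clifford $C_L$, into a $\le t$-qubit piece that matters and an $(n-t)$-qubit piece that can be set to $\ket{0}$. The paper achieves both things simultaneously with a different decomposition. Instead of commuting Cliffords past $T$-gates, each $T$-gate is replaced by a post-selected magic-state injection gadget, so the entire circuit becomes Clifford and the non-Clifford resource is exactly a $t$-qubit ancilla register holding $\ket{A}^{\otimes t}$. The acceptance probability is then $\pi(U,\psi) = \gamma\,2^{t-r}\,\|\Pi^{(3)}\ket{\psi\otimes A^{\otimes t}}\|^2$ for a \emph{stabilizer} projector $\Pi^{(3)}$ on $n+t$ qubits. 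The missing tool you would need is exactly the bipartite stabilizer-projector structure theorem of \cite{bravyi2006ghz}: any stabilizer projector on a bipartition $L$ (witness, $n$ qubits) versus $R$ (magic, $t$ qubits) is Clifford-locally equivalent to a tensor product of single-qubit $\ketbra{0}$/$\II$ factors and at most $t$ two-qubit factors straddling $L$ and $R$ (EPR projectors or $\ketbra{00}+\ketbra{11}$). Since $|R|=t$, at most $t$ qubits of $L$ can be entangled with $R$, which is precisely the symplectic/normal-form argument you anticipated needing. The remaining $n-t$ qubits of $L$ only see $\ketbra{0}$ or $\II$ factors, so the optimal witness is $C_L^\dagger(\ket{0\cdots0}\otimes\ket{\junk}\otimes\ket{\phi})$ with $\ket{\phi}$ on $\le t$ qubits, giving $W=C_L^\dagger$. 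Without magic-state injection and without this bipartite stabilizer normal form, the argument does not close.
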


This insight can be used to construct a faster algorithm for parameterized $\QCSAT$ instances.

\begin{restatable}{theorem}{thmmainalgorithm}
There exists a classical randomized algorithm that takes as input
a parameterized  instance of $\QCSAT$ problem, a precision parameter $\delta>0$,
 and outputs a real random variable
$\xi$ such that 
\be
0\le \xi\le \Val \quad \mbox{and} \quad
\mathrm{Pr}[\xi \ge (1-\delta)\Val]\ge \frac{99}{100}.
\ee
The algorithm has runtime $\poly(n,m,s,t) + O(\delta^{-1} t 2^{t})$.
\label{thm:main-algorithm-version}
\end{restatable}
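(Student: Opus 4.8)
The plan is to combine \Cref{thm:main} with a magic-state simulation of the compressed witness space and a power-iteration estimate of a top eigenvalue. \emph{Step 1 (reduction to a $t$-qubit problem).} First I would run the $\poly(n,s)$-time procedure of \Cref{thm:main} to obtain the Clifford unitary $W$. Since that theorem places an optimal witness in the stabilizer subspace $\{W(\ket{\phi}\otimes\ket{0^{n-t}}):\ket{\phi}\in\qubits{t}\}$,
\[
\Val \;=\; \max_{\ket{\phi}\in\qubits{t}}\bra{\phi}\, M \,\ket{\phi},
\qquad
M \;\defeq\; \big(\bra{0^{n-t}}\otimes\bra{0^{m}}\big)(W^\dagger\otimes\id^{\otimes m})\,U^\dagger\,\ketbra*{1^k}\,U\,(W\otimes\id^{\otimes m})\big(\ket{0^{n-t}}\otimes\ket{0^{m}}\big),
\]
a positive semidefinite operator on the $t$ ``free'' qubits $R$; hence it suffices to estimate $\lambda_{\max}(M)=\Val$ to relative error $\delta$. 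Because $M$ is positive semidefinite, every Rayleigh quotient $\xi=\bra{v}M\ket{v}/\langle v|v\rangle$ satisfies $0\le\xi\le\Val$, so the one-sided guarantee $\xi\le\Val$ comes for free as long as $\xi$ is computed without overshoot (exact cyclotomic arithmetic, or high-precision arithmetic followed by rounding down).

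\emph{Step 2 (an $n$-independent, succinct description of $M$).} The circuit $U(W\otimes\id^{\otimes m})$ is Clifford plus $t$ $T$-gates. Replace each $T$-gate by the standard magic-state gadget --- a CNOT onto a fresh ancilla prepared in $\ket{A}=T\ket{+}$ followed by a standard-basis measurement of that ancilla --- and \emph{post-select} every gadget measurement on outcome $0$; this outcome occurs with probability exactly $2^{-t}$ independently of the input and realises $U(W\otimes\id^{\otimes m})$ with no corrections, so the whole computation becomes a single Clifford circuit $V$ on $n+m+t$ qubits, computable in $\poly(n,m,s,t)$ time. Propagating the stabilizer group through $V$ and then pinning the $n-t+m$ ancilla qubits to $\ket{0}$ --- using the elementary fact that restricting a stabilizer projector to a fixed computational value on a subset of qubits yields a nonnegative multiple of a stabilizer projector --- produces, in $\poly(n,m,s,t)$ time, an explicit stabilizer projector $\Pi$ on the $2t$ qubits $R\cup G$ ($G$ being the magic register), given by at most $2t$ Pauli generators, together with a scalar $c\ge 0$ such that
\[
M \;=\; 2^{t}\,c\cdot\bra{A^{\otimes t}}_{G}\,\Pi\,\ket{A^{\otimes t}}_{G}.
\]

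\emph{Step 3 (estimating $\lambda_{\max}$).} I would then estimate $\lambda_{\max}(M)$ by $O(t/\delta)$ rounds of (possibly stochastic) power iteration on the $2^{t}$-dimensional register $R$, started from a Haar-random unit vector, outputting the Rayleigh quotient $\xi$ of the final iterate. The standard convergence analysis of the power method in a $2^{t}$-dimensional space gives $\Pr[\xi\ge(1-\delta)\Val]\ge 99/100$, which together with $\xi\le\Val$ from Step~1 is the stated guarantee. The cost is $O(t/\delta)$ matrix--vector multiplications by $M$ plus $O(2^{t})$ per renormalisation, so what remains is to perform a single product $\ket{v}\mapsto M\ket{v}$ in $\tilde O(2^{t})$ time from the succinct data of Step~2. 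Writing $\Pi=\prod_i\tfrac12(\id+g_i)$ and using $\bra{A^{\otimes t}}_{G}\,h\,\ket{A^{\otimes t}}_{G}=\prod_j f(h_j)$ with $f(I)=1$, $f(X)=f(Y)=2^{-1/2}$, $f(Z)=0$, one sees that $M$ acts as a \emph{nonnegative} combination of Pauli operators on $R$; the plan is to implement this action either (i) exactly, by first bringing $\Pi$ to a normal form across the $R\mid G$ cut so that contracting $G$ against the product state $\ket{A^{\otimes t}}_{G}$ factors through single-qubit operations and leaves only a Clifford $\mathcal V_{R}$ to apply to the $2^{t}$-dimensional vector, or (ii) via an unbiased single-sample estimator (draw a group element $g$ uniformly from the stabilizer group of $\Pi$ and apply the Pauli $g^{R}$) fed into a stochastic power iteration. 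Getting one of these routes to run within $\tilde O(2^{t})$ per step --- and thereby fitting the whole algorithm into $\poly(n,m,s,t)+O(\delta^{-1}t2^{t})$ --- is the step I expect to be the main obstacle, since the naive evaluation of the action of $M$ (e.g. expanding it fully in the Pauli basis) costs $\tilde O(4^{t})$.
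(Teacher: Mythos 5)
Your Steps 1 and 2 match the paper's setup, but Step 3 has a genuine gap that you yourself flag: you never actually exhibit an $\widetilde{O}(2^t)$-time implementation of $\ket{v}\mapsto M\ket{v}$, and neither of the two routes you sketch is established. Route (ii) rests on the claim that $M$ is a \emph{nonnegative} combination of Pauli operators on $R$, but this is not justified: while $\bra{A^{\otimes t}}h\ket{A^{\otimes t}}\ge 0$ for a \emph{bare} Pauli string $h$, each stabilizer group element $g$ of $\Pi$ carries a sign/phase, and there is no canonical way to apportion that phase between the $R$-part and the $G$-part; the resulting coefficients of $g_R$ can be negative or complex, and in any case a stochastic power iteration would need its own variance analysis, which you do not supply. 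Route (i), ``a normal form across the $R\mid G$ cut so that contracting $G$ against $\ket{A^{\otimes t}}$ factors through single-qubit operations,'' is phrased in a way that cannot work as stated: the Clifford needed to bring a bipartite stabilizer projector to normal form acts globally on the $G$ register, so $C_G\ket{A^{\otimes t}}$ is \emph{not} a product state and the contraction does not factorize qubit-by-qubit.

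What the paper does at exactly this point is invoke \cref{fact:ghz} (the Bravyi--Fattal--Cubitt bipartite stabilizer decomposition), which was already used in the proof of \cref{thm:main}. That fact expresses the projector, up to local Cliffords $C_L\otimes C_R$, as a tensor product of the primitives $\ketbra{\Phi^+}$, $\ketbra{00}+\ketbra{11}$, $\ketbra{0}$, and $\id$, with each two-qubit term straddling the cut. The $\ketbra{00}+\ketbra{11}$ terms commute with $Z$ on the $L''$ side, so the maximizing witness can be taken to be a computational-basis state $\sigma$ on that subregister; enumerating over $\sigma\in\{0,1\}^{\ell_2}$ and applying the teleportation identity $\bra{\psi}\Phi^+\rangle\langle\Phi^+\ket{\psi} = 2^{-1}\ketbra{\psi^*}$ to the $\ketbra{\Phi^+}$ terms converts $\lambda_{\max}$ of the $L''$-operator into the largest Schmidt coefficient of an explicit pure state $\ket{\varphi(\sigma)}$ on $\le t$ qubits, obtained by indexing into a precomputed $2^t$-amplitude vector $C_R\ket{A^{\otimes t}}$. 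Since $\rho_{R_1}(\sigma)=\eta\eta^\dagger$ with $\eta$ a $2^{r_1}\times 2^{r_4}$ matrix, each matrix--vector product costs $O(2^{t-r_2-r_3})$, and summing over the $2^{\ell_2}=2^{r_2}$ choices of $\sigma$ keeps the total at $O(\delta^{-1}t2^t)$. This is the concrete ingredient your proposal is missing; without it, the claimed runtime is not established.
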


Quantum circuit satisfiability ($\QCSAT$) is the analog of quantum circuit simulation ($\QMA$- vs. $\BQP$-completeness) in the same way that classical circuit satisfiability is the analog\footnote{Technically, they are the analogs of randomized circuit satisfiability and randomized circuit simulation ($\MA$- vs. $\BPP$-completeness).} of circuit simulation ($\NP$- vs. $\P$-completeness).
Recall that it is widely believed that any classical algorithm for parameterized quantum circuit simulation should have a runtime scaling exponentially in $t$; the current matching upper-bound scales as $2^{\alpha t}$, 
where $\alpha < 0.3963$
for exact simulators~\cite{best-stab-bound} and $\alpha<0.23$ for approximate
simulators~\cite{bravyi2016improved}.
Our result shows that the $\QCSAT$ verification  problem is not much harder than its simulation counterpart; the resulting exponential scaling is worse, but there is no exponential dependence on the instance size or $n$, the size of the witness. 
This is surprising as \emph{classical} circuit satisfiability is believed to require a runtime scaling exponentially with $n$ (the size of the witness) to solve, while classical circuit simulation is trivially solvable in polynomial-time. 
Therefore, it would be reasonable to expect that a parameterized $\QCSAT$ instance would incur a
slowdown scaling exponentially with the witness size $n$ due to non-determinism of the problem and a slowdown scaling exponentially with $t$ due to its quantumness.
We instead show that the exponential time scaling can be brought to scale with only $t$ when $t \ll n$. 
This is the primary power of Theorem \ref{thm:main}: to efficiently reduce the search space of optimal inputs from $n$ qubits to $t$ qubits. 

Our result may seem surprising in view of the earlier work by Morimae et al.~\cite{morimae2015quantum}
that studied a restricted version of the class $\QMA$ where the verifier
can only perform Clifford gates.  In~\cite{morimae2015quantum}, they found that $\QMA$ with a Clifford verifier coincides with the standard $\QMA$. However, the computational model of~\cite{morimae2015quantum} is
different from ours since it allows {\em adaptive} Clifford gates
that can be classically controlled by the outcomes of intermediate measurements.
In contrast, here we consider unitary (non-adaptive) verification circuits with all 
measurements delayed until the end.

Let us briefly sketch the proof of Theorems \ref{thm:main} and \ref{thm:main-algorithm-version};
complete proofs are provided in Section \ref{subsection:omitted-proofs}.
By definition, solving  $\QCSAT$ is equivalent to estimating the largest eigenvalue of an operator $\rho \defeq \bra{0^{m}} U^\dagger \ketbra*{1^k} U \ket{0^{m}}$ acting on $n$ qubits.  Consider first a simple case when there are no $T$ gates, i.e., $t=0$ and $U$ is a Clifford circuit. At a high level, $\rho$ describes a state (unnormalized)
generated by a sequence of Clifford operations:
(1) initializing each qubit in a basis state or a maximally mixed state,
(2) applying a unitary Clifford gate, and (3) post-selectively measuring a qubit
in the standard basis. Such operations are known to have very limited computational power —
they always produce a (mixed) stabilizer-type state~\cite{gottesman-knill-theorem}.
Accordingly, the largest eigenvalue of $\rho$ can be efficiently computed using the standard stabilizer formalism~\cite{fattal2004entanglement}.
Suppose now that $U$ contains $t$ $T$-gates. It is well-known \cite{magic-state} that a $T$-gate can be implemented
by a gadget that includes only Clifford operations and consumes one copy of a magic state
$|A\ra \propto |0\ra+e^{i\pi/4}|1\ra$.
Replacing each $T$-gate in $U$ by this gadget one gets 
$\rho=\mathrm{Tr}_{\{[t]\}}( (\ketbra*{A^t} \otimes \id_n) \rho')$,
where $\rho'$ is  a bipartite stabilizer  state
of $t+n$ qubits and we trace out the first $t$ qubits. Our key technical tool is the characterization of bipartite stabilizer
states \cite{bravyi2006ghz}. This result implies that 
$\rho'=(C_1^\dag \otimes C_2^\dag ) \rho''(C_1 \otimes C_2)$, where
$C_i$ are unitary Clifford operators and $\rho''$  is a tensor product of 
local single-qubit stabilizer states and at most $t$ two-qubit stabilizer
states shared between the two subsystems.
Using this decomposition we are able to show that 
$\rho = C_2^\dag (\rho_{\mathrm{hard}}\otimes \rho_{\mathrm{easy}})C_2$,
where $\rho_{\mathrm{hard}}$ is some (non-stabilizer) state of $t$
qubits and $\rho_{\mathrm{easy}}$ is a stabilizer state of $n-t$ qubits
whose eigevalues are easy to compute.
Thus, $\QCSAT$ reduces to estimating the largest eigenvalue
of the $t$-qubit state~$\rho_{\mathrm{hard}}$. We remark that this theorem also holds if the T-gate is replaced by an arbitrary angle $Z$-rotation (since we use a post-selective magic state injection gadget, in which case it does not matter if the rotation angle is $\pi/8$ or not).
To prove Theorem~\ref{thm:main-algorithm-version}
we make use of the special structure of the state
$\rho_{\mathrm{hard}}$ and reduce the problem
of computing its largest eigenvalue to
computing the largest Schmidt coefficient of a certain pure bipartite state of at most $t$ qubits.
The latter is computed using the power method
with a random starting state~\cite{kuczynski1992estimating}.

\paragraph{Implications for $\QCMA$ vs. $\QMA$.} The description of a circuit generating a quantum state can constitute a classical witness for that state. It is an open question ($\QCMA$ vs. $\QMA$) in complexity theory if all quantum witnesses have efficient classical descriptions \cite{quant-ph/0210077,4262757,DBLP:conf/mfcs/FeffermanK18}. Our work makes progress on the parameterized version of the question by proving that the witness to any $\QCSAT$ instance with $t$ $T$-gates can be constructed with at most $\exp(t)$ $T$-gates as it only requires $t$ qubits to describe. An interesting open question is if our techniques lend themselves to any sub-exponential in $t$ upper-bound on the $T$-count of optimal witness states.

\subsection{Implied lower bounds from the exponential time hypothesis}
Theorem~\ref{thm:main-algorithm-version} provides an upper bound on the runtime of a classical algorithm for $\QCSAT$ with respect to the number of $T$-gates in the verifier.
In conjunction with other complexity-theoretic assumptions, this also implies a \emph{lower bound} on the $T$-count of the verification circuit. One such assumption is the Exponential-Time Hypothesis ($\ETH$), introduced by Impagliazzo and Paturi~\cite{impagliazzo2001complexity}.
Informally, $\ETH$ is the conjecture that classical $k$-SAT requires exponential classical time. By Theorem~\ref{thm:main-algorithm-version}, a verifier circuit for $\QMA$ with $o(n)$ $T$-gates would imply a $2^{o(n)}$-time algorithm for $k$-SAT,  
because $\NP \subset \QMA$. Thus we get the following lower bound.
\begin{restatable}{corollary}{corethtcount}
$\ETH$ implies that any $\QMA$-complete family of Clifford+T verifier circuits must include circuits with $\Omega(n)$ $T$-gates, where $n$ is the size of the witness.
\label{cor:eth-implification-t-count}
\end{restatable}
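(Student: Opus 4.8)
The plan is to argue by contrapositive: assume there were a $\QMA$-complete family of Clifford+T verifier circuits using only $o(n)$ $T$-gates, and derive a $2^{o(n)}$-time classical algorithm for $k$-SAT, contradicting $\ETH$. First I would fix a specific $\NP$-complete problem, say $3$-SAT, and recall that $\NP \subseteq \QMA$ with the standard completeness/soundness gap (e.g.\ $a = 2/3$, $b = 1/3$), so any $3$-SAT instance $\varphi$ on $N$ variables reduces in classical polynomial time to a $\QCSAT$ instance $U_\varphi$ whose acceptance value $\Val$ distinguishes the yes/no cases. The witness register of $U_\varphi$ has some size $n = n(N) = \poly(N)$, the circuit size is $s = \poly(N)$, and by the hypothesized efficiency of the verifier family the $T$-count is $t = t(n) = o(n)$.

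The key step is then to invoke Theorem~\ref{thm:main-algorithm-version}: there is a classical randomized algorithm that, given $U_\varphi$ and a fixed precision $\delta$ (chosen as a constant smaller than the relative gap $(a-b)/a$), outputs $\xi$ with $0 \le \xi \le \Val$ and $\xi \ge (1-\delta)\Val$ with probability $\ge 99/100$; thresholding $\xi$ correctly decides $\varphi$ with high probability (and the success probability can be boosted by repetition). Its runtime is $\poly(n,m,s,t) + O(\delta^{-1} t 2^t)$. Since $n, m, s$ are all $\poly(N)$ and $t = o(n) = o(\poly(N))$, the polynomial term is $\poly(N)$. The only place exponential blow-up can enter is the $O(\delta^{-1} t 2^t)$ term, which is $2^{o(n)} = 2^{o(\poly(N))}$.

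The main obstacle — and the point requiring care — is that $2^{o(\poly(N))}$ is \emph{not} obviously $2^{o(N)}$: if the reduction $\NP \to \QMA$ blew the witness size up polynomially, say $n = \Theta(N^2)$, then $t = o(n)$ only gives $t = o(N^2)$ and $2^{o(N^2)}$ says nothing about $\ETH$. To close this, I would use the fact that the standard Kitaev-style reduction from $3$-SAT to the local Hamiltonian / circuit satisfiability problem is \emph{quasi-linear}: the verifier circuit for a $3$-SAT instance on $N$ variables has a witness register of size $n = \tilde{O}(N)$ (indeed $n = N$ up to the encoding of the assignment, with the rest being ancillas in the $\ket{0^m}$ register, which do not count toward $n$). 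With $n = \tilde{O}(N)$ and $t = o(n)$ we get $t = o(\tilde{O}(N)) = o(N)$ after absorbing polylog factors, hence total runtime $2^{o(N)}$, contradicting $\ETH$ for $3$-SAT. Therefore some circuit in the family must have $t = \Omega(n)$ $T$-gates. I would also note explicitly that the randomization is harmless: $\ETH$ is standardly taken to rule out $2^{o(N)}$-time \emph{randomized} algorithms as well (or one derandomizes via the Boolean output and majority vote), so the $99/100$ success guarantee of Theorem~\ref{thm:main-algorithm-version} suffices.
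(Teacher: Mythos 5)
Your proposal is correct and takes essentially the same route as the paper: assume a verifier family with $o(n)$ $T$-gates, invoke Theorem~\ref{thm:main-algorithm-version} together with $\NP \subseteq \QMA$ to decide $3$-SAT in $2^{o(n)}$ classical time, and contradict $\ETH$. Your added care about the witness register of the $3$-SAT verifier being (quasi-)linear in the number of variables and about $\ETH$ also ruling out randomized algorithms fills in details that the paper's three-line proof leaves implicit (the paper even describes the runtime as deterministic, although Theorem~\ref{thm:main-algorithm-version} gives a randomized algorithm).
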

Interestingly, we can also use $\ETH$ and Theorem~\ref{thm:main-algorithm-version} to get a lower bound on the $T$-count of a quantum circuit that prepares the $m$-qubit $W$-state 
\begin{align}\ket{W_m} = \frac{1}{\sqrt{m}}\big(
\ket{100\cdots0} + \ket{010\cdots0} + \cdots + \ket{000 \cdots 1}\big).\end{align} 
\begin{restatable}{corollary}{corethwstate}
$\ETH$ implies that any Clifford+T circuit $V$ that, when applied to the all-zero state, outputs $\ket{W_m} \otimes \ket{\junk}$ must include $\Omega(m)$ T gates.
\label{cor:eth-implication-w-state}
\end{restatable}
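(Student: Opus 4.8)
The plan is to argue by contradiction with $\ETH$, in the same spirit as Corollary~\ref{cor:eth-implification-t-count}, but routing the hardness of $3$-SAT through a $W$-state gadget placed inside a $\QCSAT$ verifier. Suppose there were a Clifford+$T$ circuit $V$ with $V\ket{0\cdots 0}=\ket{W_m}\otimes\ket{\junk}$ whose $T$-count is $t\le\epsilon m$ for a small enough absolute constant $\epsilon>0$. I would turn $V$ into a $\QCSAT$ verifier for $3$-SAT whose \emph{only} non-Clifford content is a single invocation of $V$, and then feed it to Theorem~\ref{thm:main-algorithm-version}.

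First I would set up the reduction. Start from a $3$-SAT formula $\phi$; by the Sparsification Lemma we may assume it has $n$ variables and $M=\Theta(n)$ clauses, and by padding with tautological clauses we may take $M$ equal to any prescribed value of order $n$ (this is how the conclusion is obtained for every $m$, by setting $m=M$). I would construct a $\QCSAT$ instance $U_\phi=D_2\,(V\otimes I)\,D_1$ with $D_1,D_2$ Clifford and of size $\poly(n)$, computable in time $\poly(n)$, so that $\#T(U_\phi)=t$. The witness register holds a claimed assignment $\ket a$; $D_1$ discards $\ket{\junk}$ and exposes the $W$-register $\tfrac{1}{\sqrt{m}}\sum_j\ket{e_j}$ as a uniform unary superposition over the $M=m$ clauses; and $D_2$, together with a well-chosen set of measured output qubits, implements the clause checks so that on a computational-basis witness $\ket a$ the probability of the all-ones outcome equals $g\big(\#\text{sat}(a)\big)$, where $\#\text{sat}(a)$ is the number of clauses of $\phi$ satisfied by $a$ and $g$ is a fixed strictly increasing function (say $g(c)=c/M$), with superposed witnesses giving the corresponding convex combination of these numbers. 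Then $\Val(U_\phi)=g\big(\max_a\#\text{sat}(a)\big)$, which equals $g(M)$ if $\phi$ is satisfiable and is at most $g(M-1)$ otherwise, a promise gap of $1/\poly(n)$.

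Given such a $U_\phi$, the remainder is bookkeeping. I would run the algorithm of Theorem~\ref{thm:main-algorithm-version} on $U_\phi$ with precision $\delta$ small enough that $(1-\delta)g(M)>g(M-1)$, which needs only $\delta=1/\poly(n)$. Since $\#T(U_\phi)=t\le\epsilon m$, $\delta^{-1}=\poly(n)$, and $m=\Theta(n)$, the running time $\poly(n)+O\big(\delta^{-1}\,\#T(U_\phi)\,2^{\#T(U_\phi)}\big)$ is at most $2^{O(\epsilon n)}$ for $n$ large, and the output $\xi$ decides $\phi$ with probability $\ge 99/100$ (in the yes case $\xi\ge(1-\delta)g(M)>g(M-1)$ with high probability; in the no case $\xi\le\Val(U_\phi)\le g(M-1)$ always), which after standard amplification is a randomized $2^{O(\epsilon n)}$-time algorithm for $3$-SAT. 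For $\epsilon$ below the constant furnished by $\ETH$ this contradicts $\ETH$, so $\#T(V)=\Omega(m)$.

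The one genuinely hard step is the clause-checking gadget, i.e.\ making $D_2$ Clifford. The naive idea — use the $W$-register to select a uniformly random clause and then coherently evaluate that clause on $\ket a$ — does not work: letting a one-hot clause register interact multiplicatively with the assignment bits costs a Toffoli-type gate per clause, i.e.\ $\Theta(M)=\Theta(n)$ extra $T$-gates, which swamps $t$ and kills the argument; and spending a separate small $W$-state per constraint (as one can do to verify an ``exactly-one'' constraint using only CNOTs and a single all-zeros postselection) again costs $\Theta(n)$ $T$-gates overall. So the technical core of the proof is to pick an $\NP$-complete constraint problem ($3$-SAT, or an ``exactly-one''/exact-cover variant) together with an encoding for which \emph{all} constraints can be checked, given one copy of a $W$-state of size $\Theta(n)$, by $\FF_2$-linear (hence Clifford) operations plus a clever choice of which standard-basis outcome to accept — exploiting that the $X$-translates of $\ket{W_m}$ are supported exactly on the Hamming-weight-one shell, so a ``weight-one''-type constraint is precisely the local nonlinearity a single standard-basis measurement can read off with no Toffoli. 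Establishing that such a gadget exists, that it induces the convex-combination behaviour on superposed witnesses, and that it only loses a $1/\poly(n)$ factor in the promise gap is, I expect, where essentially all of the effort goes; the reduction to Theorem~\ref{thm:main-algorithm-version} and the $\ETH$ accounting above are routine.
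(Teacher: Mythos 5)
Your high-level strategy --- build a $\QCSAT$ verifier whose only non-Clifford content is one call to $V$, feed it to Theorem~\ref{thm:main-algorithm-version}, and do $\ETH$ accounting --- matches the paper's, and you correctly identify (and correctly rule out the naive approaches to) the one genuinely nontrivial step: making the ``clause-checking'' stage Clifford. But that step is exactly what you do not supply, and the argument does not go through without it, so as written this is a gap rather than a proof.

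The paper closes the gap by \emph{changing the $\NP$-hard problem}, not by finding a clever encoding of $3$-SAT. It starts from Barahona's result (Theorem~\ref{thm:positive-quadratic-np-hard}) that estimating the ground-state energy of a diagonal Ising Hamiltonian $H'=\sum_{\{i,j\}\in E} Z_iZ_j+\sum_{i\in V}Z_i$ is $\NP$-hard. Every term of such an $H'$ is a Pauli, so a term controlled on a single qubit is a controlled-Pauli, hence Clifford. The verifier is then $U=(V^\dagger\otimes I)\,C\,(V\otimes I)$ where $C$ applies the $i$-th term of $H\defeq H'-m'$ (shifted to be negative semidefinite and padded with $Z$ gates on the control register to represent the $-1$'s) controlled on control qubit $i$, and one post-selects the control register on $0^m$. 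Because $V\ket{0^m}=\ket{W_m}$, the acceptance probability on witness $\psi$ works out to $\frac{1}{m^2}\bra{\psi}H^2\ket{\psi}$; since $H\le 0$, maximizing this is the same as minimizing $\bra{\psi}H\ket{\psi}$, i.e.\ the ground-state energy. Note this is a quadratic form in $H$, not the ``convex combination over assignments'' behaviour you were aiming for, but since $H$ is diagonal that distinction is harmless. The resulting $U$ has $2t$ $T$-gates, and Theorem~\ref{thm:main-algorithm-version} then gives the $2^{O(t)}$-time algorithm whose existence $\ETH$ forbids when $t=o(m)$. So your outline and your diagnosis of the difficulty are sound, but the essential ingredient --- picking an $\NP$-hard problem whose constraints are \emph{Pauli operators}, so that one $W$-state plus controlled-Paulis plus post-selection suffices --- is the part you flagged as ``I expect where all the effort goes'' and did not produce.
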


Proofs of both corollaries are provided in Section \ref{sec:eth}. To our knowledge, this is the first lower bound for the $T$-count of state preparation based on complexity-theoretic assumptions. The proof uses the $\NP$-hardness \cite{barahona1982computational} of Hamiltonians of the form  $H = \sum_{\{i, j\} \in E} Z_i Z_j + \sum_{i \in V} Z_i$.
We show that a verifier circuit can be built from only a W state and Clifford operations.

\subsection{The complexity of the non-identity check problem}
The second $\QMA$-complete problem we consider in this work is the Non-Identity Check ($\NIC$) problem: given a classical description of an $n$-qubit quantum circuit $U$, decide if $U$ is close to the identity operation, i.e., is $\min_\phi \norm{U - e^{i\phi}\cdot  \II}$ $\geq c$ or  $\leq s$ for $c-s\ge 1/\poly(n)$, promised one of them is the case.\footnote{We remark that the operator norm is important here; the problem is in $\BQP$ if the goal is to decide if $\|U-\id\|_2$ is small or large, where $\|\cdot \|_2$ is the normalized-$2$ norm~\cite{DBLP:journals/toc/MontanaroW16}.} $\NIC$ was first considered by Janzing et al.~\cite{nic-def} who showed that this problem is $\QMA$-complete, by reducing it to the $\QCSAT$ problem. Subsequently, Ji and Wu~\cite{nic-ji-wu} showed that the $\NIC$ problem remains $\QMA$-complete for even depth-$2$ circuits with \emph{arbitrary} gates. This motivates a natural question: what is the parameterized complexity of $\NIC$? In particular, how does the complexity of $\NIC$ scale with the number of non-Clifford gates? Below we show that $\NIC$ can be solved in polynomial-time for Clifford circuits.%

\begin{restatable}{theorem}{thmnic}
$\NIC$ for Clifford circuits is contained in $\P$.
\end{restatable}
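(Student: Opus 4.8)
The plan is to reduce the $\NIC$ promise problem for a Clifford unitary $U$ on $n$ qubits to a computation entirely within the stabilizer formalism. The key quantity is $d(U) \defeq \min_\phi \norm{U - e^{i\phi} \II}$, and the first step is to rewrite this purely in terms of the spectrum of $U$. Since $U$ is unitary, its eigenvalues lie on the unit circle, say $e^{i\theta_1}, \dots, e^{i\theta_{2^n}}$, and $\norm{U - e^{i\phi}\II} = \max_j |e^{i\theta_j} - e^{i\phi}|$. Minimizing over $\phi$, $d(U)$ is determined by the smallest arc of the circle containing all eigenphases: if that arc has angular width $\Delta$, then $d(U) = 2\sin(\Delta/4)$ when $\Delta \le \pi$ (and $d(U) = 2$ otherwise). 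So deciding the $\NIC$ promise reduces to estimating the angular spread of the eigenphases of $U$, and it suffices to know this spread to within the $1/\poly(n)$ gap.

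The second and central step is to compute the relevant spectral data of a Clifford unitary in polynomial time. I would use the standard fact that an $n$-qubit Clifford $U$ is specified (up to a global phase that is irrelevant here) by its action on the Pauli group: $U P_j U^\dagger = \pm Q_j$ for generators $P_1, \dots, P_{2n}$ of the Pauli group, and this symplectic-plus-sign data is computable in time $\poly(n,s)$ from the gate list. From this one can read off the order of $U$ (it is known that every Clifford has order $O(1)$ times a power dividing a small bound — in fact the order of any $n$-qubit Clifford divides a fixed small number such as $8$ times something polynomial-free; more robustly, $U$ has finite order $r = O(1)$ up to global phase by a pigeonhole/structure argument on the finite Clifford group, though $r$ itself need not be polynomially bounded). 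The cleaner route: the eigenvalues of $U$ are all of the form $e^{2\pi i p/q}$ for small denominators $q$ (dividing $8$, since $U^8 = \omega \II$ for a root of unity $\omega$ for every Clifford $U$ up to phase — this follows from the fact that the quotient of the Clifford group by its center is finite with exponent dividing a small constant). Hence the eigenphases of $U$, after factoring out the global phase, take finitely many known values, and the only question is \emph{which} of these values actually occur in the spectrum.

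The third step is therefore to decide, for each candidate eigenvalue $\lambda$, whether the eigenspace $\ker(U - \lambda \II)$ is nonzero. This is where the stabilizer machinery does the work: $\ker(U - \lambda\II)$ is exactly the space stabilized by the group generated by $\{\lambda^{-1} U P_j U^\dagger P_j^{-1}\}$ intersected appropriately — more directly, one can compute $\mathrm{rank}(U - \lambda \II)$, equivalently $\mathrm{Tr}\, \Pi_\lambda$ where $\Pi_\lambda$ is the spectral projector, by expanding $\Pi_\lambda = \frac{1}{r}\sum_{k} \lambda^{-k} U^k$ over one period and evaluating $\mathrm{Tr}(U^k)$. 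Each $\mathrm{Tr}(U^k)$ for a Clifford $U^k$ is computable in $\poly(n)$ time from its symplectic representation (the trace of an $n$-qubit Clifford is $0$ unless its symplectic part fixes a subspace of a certain dimension, in which case it is $\pm 2^{(n+d)/2}$ for an explicitly computable $d$ and sign). So all eigenvalue multiplicities, and in particular which eigenphases are present, can be extracted in polynomial time; then compute $\Delta$, hence $d(U)$, and compare against the thresholds $c, s$.

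The main obstacle I anticipate is Step 2: pinning down that the eigenphases of an arbitrary $n$-qubit Clifford lie among a small, polynomially-enumerable set of candidate values so that Step 3 has only $\poly(n)$ projectors to test. Concretely one must argue that $U^r = e^{i\gamma}\II$ for some $r$ that, while possibly large, has only $O(\log n)$ or $\poly(n)$ distinct prime power factors so the period-$r$ Fourier sum defining $\Pi_\lambda$ can be handled — or, better, show that distinct eigenphases of $U$ are separated by at least a fixed constant $\Omega(1)$, which already makes the $\NIC$ gap trivial to resolve. Establishing this spectral-gap / small-order structure of the Clifford group cleanly is the crux; the trace-computation and stabilizer-rank steps are routine applications of the Gottesman--Knill apparatus.
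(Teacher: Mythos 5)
Your reduction hinges on the claim in Step 2 that every $n$-qubit Clifford $U$ satisfies $U^{r}=e^{i\gamma}\II$ for a constant (or at least polynomially structured) $r$, so that the eigenphases of $U$ range over a small, enumerable set and are separated by $\Omega(1)$. This claim is false. The Clifford group modulo Paulis and phases is $Sp(2n,\FF_2)$, which contains $GL(n,\FF_2)$ and hence Singer cycles of order $2^n-1$; concretely, a CNOT circuit implementing an invertible linear map $x\mapsto Ax$ with $A$ a Singer cycle is a permutation matrix of order $2^n-1$, so its eigenvalues are $(2^n-1)$-th roots of unity. Thus the exponent of the Clifford group grows exponentially with $n$, distinct eigenphases can be $2^{-\Omega(n)}$ apart (and exponentially close to $0$), and the candidate eigenvalue set is exponentially large. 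Consequently Step 3 also breaks down as written: the Fourier expansion $\Pi_\lambda=\frac{1}{r}\sum_{k}\lambda^{-k}U^{k}$ runs over a period $r$ that can be exponential, so even though each individual $\Tr(U^{k})$ is computable in $\poly(n)$ time, you would need exponentially many of them, for exponentially many candidates $\lambda$. You correctly flagged this as the crux, but it is not a technical loose end that can be patched — the structural statement it relies on is simply not true, so the spectral-enumeration strategy does not yield a polynomial-time algorithm.

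For contrast, the paper sidesteps eigenvalue enumeration entirely: it sets $H=(\II-C)^\dagger(\II-C)=2\II-C-C^\dagger$ and distinguishes $\norm{H}\ge\alpha^2$ from $\norm{H}\le\beta^2$ by computing the single scalar $\Tr(H^{p})$ exactly for one $p=\poly(n)$, chosen so that $\alpha^{2p}\gg 2^{n}\beta^{2p}$; expanding $H^{p}$ as a sum of $2p+1$ Clifford powers reduces this to polynomially many traces $\Tr(C^{i})$, each evaluated as a stabilizer-state inner product in $O(n^3)$ time. The ingredient you and the paper share — efficient computation of traces of Clifford powers — is sound, but the gap amplification via a high power of $H$ is what removes any need to resolve individual eigenphases, which is exactly where your approach gets stuck. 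If you want to salvage your write-up, replacing Steps 2--3 with this power-trace argument (or an equivalent norm-amplification step) is the natural fix.
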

We sketch the proof of this theorem here while a complete proof is provided in Section \ref{sec:nic}.
By definition, solving $\NIC$ for a Clifford circuit $U$ on $n$ qubits is equivalent to estimating the
maximum eigenvalue of a Hamiltonian 
$H=(\id-U)^\dag (\id -U)$, that is, checking if $\|H\|\geq c^2$ or $\leq s^2$. In the  former case, observe that $\Tr(H^p)\geq c^{2p}$ and in the latter case  $\Tr(H^p)\leq 2^n\cdot s^{2p}$. Furthermore, since $c-s\geq 1/\poly(n)$, we can pick $p=\poly(n)$ to make $c^{2p}\gg 2^n\cdot s^{2p}$. 
Thus it suffices to estimate $\Tr(H^p)$ with $p=\poly(n)$.
To this end, observe that $H^p$ can be written as a weighted sum of Clifford powers $U^i$ with $i\in \{-p,\ldots, p\}$.
Thus  $\Tr(H^p)=\sum_{i=-p}^p\alpha_i\Tr(U^i)$ and 
the coefficients $\alpha_i$ can be efficiently computed
using a simple recursive formula. Furthermore, we can efficiently compute $\Tr(U^i)$ for every $i$
 using the identity $\Tr(U^i)=2^n\bra{\Phi^{\otimes n}}
 U^i \otimes \id \ket{\Phi^{\otimes n}}$,
 where $|\Phi\ra$ is the EPR state. 
 The right-hand side is the inner product between two stabilizer states of $2n$ qubits. Such inner products can be computed exactly in time $O(n^3)$,  
 see~\cite{garcia2012efficient,bravyi2016improved}.
 Putting all this together, we can compute $\Tr(H^p)$ exactly in time $\poly(n)$, which determines if this is a ``yes" or ``no" instance of the $\NIC$ problem.  
  
In addition, we also prove that the $\NIC$ problem for constant-depth circuits using gates from a constant-sized gate set is solvable for vanishing completeness parameters. This is in contrast to \cite{nic-ji-wu} which shows constant-depth circuits using gates from a general gate set is $\QMA$-complete.

\begin{restatable}{theorem}{thmfiniteset}
Let $\Gg$ be any constant-sized gate set of 1 and 2 qudit gates and $U$ a quantum circuit of depth at most $t = O(1)$ acting on $n$ qudits of fixed finite local dimension $d$. Let $a < b$ be any parameters such that $a(n) = o(1)$. Then the $\NIC$ problem $(a,b)$ for this restricted class of circuits is in $\P$.
\label{finite-set-p-theorem}
\end{restatable}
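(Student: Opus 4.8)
\section*{Proof plan for Theorem~\ref{finite-set-p-theorem}}

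The plan is to show that, for circuits drawn from a fixed finite gate set and of constant depth, being $o(1)$-close to a scalar is \emph{the same as} being exactly a scalar, and that the latter property is decidable in polynomial time by a local (light-cone) check. Concretely, I would run the following algorithm: for every qudit $i$ and every single-qudit generalized Pauli generator $P\in\{X_i,Z_i\}$, compute the Heisenberg-evolved operator $O_{i,P}\defeq U^\dagger P U$ by conjugating $P$ through the gates of $U$ one at a time, skipping every gate that acts outside the current support. Since $U$ has depth $t=O(1)$, the support can at most double per layer, so it stays inside the backward light cone of $i$, a set of at most $2^t=O(1)$ qudits touched by at most $2^t$ gates; hence each $O_{i,P}$ is an $O(1)\times O(1)$ matrix, computable exactly (in the ring generated by the entries of $\Gg$) or to $O(1)$ bits of precision, and the whole computation over the $2n$ pairs $(i,P)$ takes time $\poly(n,s)$. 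The algorithm reports ``$U$ is within $a$ of a scalar'' iff $O_{i,P}=P$ for every $i$ and $P$, and reports ``$U$ is $b$-far from every scalar'' otherwise.

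Correctness rests on two facts. First, $U=e^{i\phi}\,\II$ for some $\phi$ iff $U^\dagger P U=P$ for all single-qudit Paulis $P$: the forward direction is immediate, and conversely conjugation by $U$ is a $*$-homomorphism, so fixing the generators $X_i,Z_i$ forces it to fix all products $\bigotimes_i X_i^{a_i}Z_i^{b_i}$, which span the operator algebra; then conjugation by $U$ is the identity superoperator, so $U$ is central, i.e.\ a scalar. Second -- and this is where finiteness of $\Gg$ and constant depth are used essentially -- I would prove a \emph{gap lemma}: there is a constant $\epsilon_0=\epsilon_0(\Gg,t,d)>0$ such that every depth-$t$ circuit $U$ over $\Gg$ has $\min_\phi\|U-e^{i\phi}\II\|\in\{0\}\cup[\epsilon_0,2]$. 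Indeed, if $U$ is not a scalar then by the first fact some $O_{i,P}=U^\dagger P U$ differs from $P$; but $O_{i,P}$ is determined entirely by the backward light-cone data around $i$ -- one of finitely many light-cone ``shapes'' together with an assignment of gates of $\Gg$ (or no gate) to each of its $\le 2^t$ slots, a finite list not depending on $n$. So $\|O_{i,P}-P\|$ takes one of finitely many values and, being nonzero, is at least some universal $\delta_0>0$; combined with the elementary estimate $\|U^\dagger P U-P\|\le 2v+v^2$ for $v\defeq\min_\phi\|U-e^{i\phi}\II\|$ (expand $U=e^{i\phi^\ast}\II+E$ with $\|E\|=v$ and use $\|P\|=1$), this gives $v\ge\epsilon_0\defeq\sqrt{1+\delta_0}-1>0$.

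These two facts finish the proof. Since $a(n)=o(1)$ there is a constant $n_0$ with $a(n)<\epsilon_0$ for all $n\ge n_0$; for such $n$, an instance with $\min_\phi\|U-e^{i\phi}\II\|\le a(n)$ satisfies, by the gap lemma, $\min_\phi\|U-e^{i\phi}\II\|=0$, so $U$ is a scalar and all checks pass, while an instance with $\min_\phi\|U-e^{i\phi}\II\|\ge b>a\ge 0$ is not a scalar, so some check fails -- in both cases the algorithm's report is correct, which suffices to decide $\NIC(a,b)$ under its promise. For the finitely many $n<n_0$ the circuit acts on $O(1)$ qudits, so there one multiplies $U$ out explicitly and computes $\min_\phi\|U-e^{i\phi}\II\|=2\sin(\omega/4)$, with $\omega$ the width of the smallest arc of the unit circle containing $\mathrm{spec}(U)$, comparing it to $a$ and $b$ in time $\poly(s)$. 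I expect the main obstacle to be the gap lemma: one must argue carefully that gates outside the backward light cone genuinely leave $U^\dagger P U$ unchanged and that the light-cone configurations form a finite list, and then track the dependence of $\epsilon_0$ on $t$, $d$, and $\Gg$; by contrast, once $\epsilon_0$ is available the algorithm and its analysis are routine. Notably, it is exactly this gap lemma that collapses for the unrestricted gate sets of~\cite{nic-ji-wu}, where a gate may sit arbitrarily close to, yet distinct from, a scalar.
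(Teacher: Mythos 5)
Your proposal is correct, and it rests on the same two pillars as the paper's proof---constant-depth light cones reduce everything to constant-size subcircuits, and finiteness of the gate set yields a constant gap, so $a(n)=o(1)$ forces a near-identity circuit to be \emph{exactly} a scalar---but you execute these steps by a genuinely different route. The paper works in the Schr\"odinger picture: it defines the increment-distance $\eta_{\ell,h}$ as the minimum of $d_{\II}(V)$ over non-scalar depth-$\le t$ circuits on at most $2^{t+1}$ qudits, argues via reduced density matrices that $d_{\II}(U)<a<\eta$ forces each light-cone circuit $U_{L_i}$ to be exactly a scalar, and then checks exactly that. You instead work in the Heisenberg picture: $U$ is a scalar iff it fixes every single-qudit Pauli generator under conjugation (your algebra-spanning argument), each $U^\dagger P U$ is determined by the constant-size light cone, and your gap lemma turns the finitely many possible nonzero values of $\norm{U_{L}^\dagger P U_{L}-P}\ge\delta_0$ into the explicit bound $d_{\II}(U)\ge\sqrt{1+\delta_0}-1$ via the estimate $\norm{U^\dagger P U-P}\le 2v+v^2$. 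What your route buys is rigor and concreteness: it supplies arguments for two steps the paper leaves informal (why agreement of all local actions forces $U$ to be a \emph{global} scalar, and the quantitative passage from the local deviation to $d_{\II}(U)$---the paper's inference that $d_{\II}(U)<a$ implies $d_{\II}(U_{L_i})<a$ is asserted rather than proved), and it gives a clean equality test $U^\dagger X_iU=X_i$, $U^\dagger Z_iU=Z_i$ with an explicit gap constant. What the paper's route buys is brevity, since defining $\eta_{\ell,h}$ directly in terms of $d_{\II}$ avoids the perturbation bookkeeping. Your brute-force treatment of the finitely many instances with $n<n_0$ is fine (the paper simply restricts to $n>N_0$); under the promise you only need to decide which side of the gap $d_{\II}(U)$ lies on, so exact arithmetic over the fixed gate entries, or a finite lookup table, suffices there.
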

  
Our success at understanding parameterized $\QCSAT$ came from our characterization of optimal witness states of parameterized verifier circuits as small linear combinations of stabilizers. However, the eigenvectors of parameterized $\NIC$ circuits $U$ do not have such a characterization. We need to develop new tools to classically describe the eigenvectors of $U$ in order to achieve an equivalent result for the parameterized $\NIC$ problem. Our inability to do so might suggest that the parameterized problem is harder than we previously suspected. 

We conclude with the following intriguing question regarding the parameterized complexity of $\NIC$ problems.
Since the algorithm for $\NIC$ for Clifford circuits breaks down in the presence of a single non-Clifford gate, the question of parameterized complexity of $\NIC$ is left largely open. It may happen that this problem becomes hard (e.g., $\textsf{NP}$-hard) in the presence of a single non-Clifford gate.

\section{Proofs of Theorems \ref{thm:main} and \ref{thm:main-algorithm-version}}
\label{subsection:omitted-proofs}

\thmmain*

\newcommand{\out}{\mathsf{out}}

\begin{proof}
Suppose $U$ is a Clifford+$T$ circuit with 
$c = s - t$ Clifford gates and 
$t$ $T$-gates. 
We assume that $U$ acts on $n+m$ qubits partitioned into 
a witness register of $n$ qubits and an ancilla register of $m$ qubits.
Let $\Pi_{\out}$ be a projector onto the all-ones state $\ketbra{1}^{\otimes k}$ applied to some designated
output register of $k$ qubits. We assume that $\Pi_{\out}$ acts trivially on the remaining $n+m-k$ qubits. 
Define the {\em maximum acceptance probability} of $U$ as
\begin{equation}
\label{pacc}
\pi(U) = \max_\psi \bra{ \psi \otimes 0^m} U^\dagger \Pi_{\out} U \ket{\psi\otimes 0^m}
=\max_\psi \norm{  \Pi_{\out} U \ket{\psi\otimes 0^m} }^2,
\end{equation}
where the maximum is over all normalized $n$-qubit witness states $\ket \psi$.
Let
\be
\pi(U,\psi) := \|  \Pi_{\out} U|\psi\otimes 0^m\ra \|^2.
\ee
We shall implement each $T$-gate in $U$ by the following well-known
post-selection gadget:
$$
\Qcircuit @C=1em @R=1em {
& \gate{T} & \qw
} \qquad \qquad = \qquad \qquad
\Qcircuit @C=1em @R=1em {
& \ctrl{1} & \qw & \qw \\
\lstick{\bra{A}} & \targ & \meter & \cw & \rstick{\ket{0}}
}
\label{eq:t-gadget}
$$
Here, we measure the output qubit in the standard basis and post-select on a measurement outcome of $0$. The gadget consumes one copy of a single-qubit magic state
\be
|A\ra \defeq \frac1{\sqrt{2}}(|0\ra + e^{i\pi/4} |1\ra)
\ee
and, therefore, we can rewrite $\pi(U, \psi)$ as
\be
\pi(U,\psi) = 2^{t}\| \Pi_{\out}^{(1)} C |\psi \otimes 0^m \otimes A^{\otimes t}\ra\|^2.
\ee
Here $C$ is a Clifford circuit acting on $n+m+t$ qubits with $c+t$ gates
(with $c$ gates originating from~$U$ and $t$ CNOT gates originating
from the gadgets) and
$\Pi_{\out}^{(1)}$ is a product of $\Pi_{\out}$ and single-qubit projectors $|0\ra\la 0|$
applied to the second qubit of each $T$-gate gadget.
The extra factor $2^t$ takes into account that each gadget succeeds with the probability $1/2$. Define $\Pi_{\out}^{(2)} = C^\dag \Pi_{\out} C$. Then
\be
\label{Pout2}
\pi(U,\psi) =  2^{t}\| \Pi_{\out}^{(2)}|\psi \otimes 0^m \otimes A^{\otimes t}\ra\|^2.
\ee
Let us say that a projector $\Pi$ acting on $n$ qubits is
a {\em stabilizer projector} if it can be written as
$\Pi=C^\dag (|0^{n-k}\ra \la 0^{n-k}|\otimes \id_k)C$
for some integer $k\in [0,n]$ and some unitary Clifford operator $C$
on $n$ qubits. We shall use the following facts.
\begin{fact}[\cite{gottesman-knill-theorem}]
\label{fact:meas}
Suppose $\Pi$ is a stabilizer projector on $n$ qubits.
Then $\mathrm{Tr}_{\{t\}}((|0\ra\la 0| \otimes \id_{n-1})\Pi)=\sigma \Pi'$
for some $\sigma \in \{0,1,1/2\}$ and some stabilizer projector $\Pi'$
on $n-1$ qubits. One can compute $\sigma$ and $\Pi'$ in time $\poly(n)$.
\end{fact}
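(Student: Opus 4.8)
The plan is to set the Clifford $C$ aside and compute directly in the stabilizer formalism. Write $\Pi = 2^{-(n-k)}\sum_{g\in S} g$, where $S$ is the abelian group of signed $n$-qubit Pauli operators that stabilize $\range(\Pi)$, so that $|S| = 2^{n-k}$ and $\mathrm{rank}(\Pi) = 2^{k}$; from the definition $\Pi = C^\dagger(|0^{n-k}\rangle\langle 0^{n-k}|\otimes\id_k)C$ one obtains a generating set $S = \langle C^\dagger Z_1 C,\dots,C^\dagger Z_{n-k}C\rangle$ in time $\poly(n)$ by the standard gate-by-gate tableau update. The first step is to notice that the operator in the statement is just a partial matrix element: if qubit $1$ denotes the qubit carrying the $|0\rangle\langle 0|$ factor (the one being traced out), then $\mathrm{Tr}_{\{1\}}\!\big((|0\rangle\langle 0|_1\otimes\id_{n-1})\Pi\big) = \langle 0|_1\Pi|0\rangle_1$, an operator on the remaining $n-1$ qubits. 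Substituting the expansion of $\Pi$ and using that $\langle 0|P|0\rangle = 0$ for a single-qubit signed Pauli $P$ unless $P$ acts as $\pm\id$ or $\pm Z$ (in which case $\langle 0|P|0\rangle = \pm1$), the sum collapses onto the centralizer $S_0 := \{g\in S : [g,Z_1]=0\}$.

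The structural heart of the argument is that $g\mapsto\phi(g):=\langle 0|_1 g|0\rangle_1$ is a group homomorphism from $S_0$ into the signed $(n-1)$-qubit Pauli group — since every $g\in S_0$ restricts on qubit $1$ to a diagonal operator in $\{\pm\id,\pm Z\}$, the matrix element $\langle 0|\cdot|0\rangle_1$ is multiplicative. Hence $\langle 0|_1\Pi|0\rangle_1 = 2^{-(n-k)}\,|\ker\phi|\sum_{h\in\hat S}h$, where $\hat S:=\phi(S_0)$ is an abelian group of signed $(n-1)$-qubit Paulis. If $-\id\in\hat S$ — equivalently $-Z_1\in S$, i.e. qubit $1$ is pinned to $|1\rangle$ — then $\sum_{h\in\hat S}h=0$ and the answer is $0$, so $\sigma=0$; otherwise $\hat S$ is a genuine stabilizer group, it defines a stabilizer projector $\Pi'=2^{-(n-1-k')}\sum_{h\in\hat S}h$ with $\mathrm{rank}(\Pi')=2^{k'}$, and the leftover scalar is $\sigma=|\ker\phi|\,2^{\,k-1-k'}$. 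The remaining work is a small case analysis for $(|\ker\phi|,k')$: since $S$ is abelian, $[S:S_0]\in\{1,2\}$, $\ker\phi\subseteq\{\id,Z_1\}$, and $Z_1\in S$ forces $S_0=S$. The three cases — $Z_1\in S$; $Z_1$ commutes with all of $S$ but $\pm Z_1\notin S$; some element of $S$ anticommutes with $Z_1$ — yield $(|\ker\phi|,k')=(2,k),(1,k-1),(1,k)$ and hence $\sigma=1,\,1,\,\tfrac12$, which together with the pinned-to-$|1\rangle$ case exhausts $\{0,\tfrac12,1\}$. Explicit generators of $\Pi'$ come out of the same case analysis via standard tableau moves: in the anticommuting case one multiplies every further generator that anticommutes with $Z_1$ by a fixed anticommuting generator so that all surviving generators commute with $Z_1$; one then discards any generator equal to $\pm Z_1$ and deletes qubit $1$ (no sign changes occur, as the qubit-$1$ parts are all $\id$ or $Z$ at that point). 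All of these steps are $\FF_2$-linear algebra on $O(n)$-by-$O(n)$ tableaux, so $\sigma$ and $\Pi'$ are computed in time $\poly(n)$.

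I do not expect a real obstacle: the statement is essentially the Gottesman--Knill theorem, recast for (possibly mixed) stabilizer projectors and a single post-selective computational-basis measurement. The one place that needs genuine care is the phase bookkeeping — tracking the $\pm1$ signs of Pauli generators correctly through both the tableau update and the restriction map $\phi$, and checking that $-\id$ enters $\hat S$ precisely in the $\sigma=0$ case and nowhere else — together with pinning down the normalization $\sigma=|\ker\phi|\,2^{k-1-k'}$ and getting the three-way case count right.
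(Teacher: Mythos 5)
The paper states this as a cited fact (attributing it to the Gottesman--Knill theorem and the standard stabilizer formalism) and gives no proof of its own, so there is no in-paper argument to compare against. Your proof is correct and is precisely the argument the citation gestures at: expanding $\Pi = 2^{-(n-k)}\sum_{g\in S}g$ over its stabilizer group, observing that $\mathrm{Tr}_{\{1\}}((\ketbra{0}_1\otimes\id)\Pi) = \bra{0}_1\Pi\ket{0}_1$, restricting the sum to $S_0 = \{g : [g,Z_1]=0\}$ (the only terms with nonzero matrix element), noting that $\phi(g) = \bra{0}_1 g\ket{0}_1$ is a group homomorphism on $S_0$ with $\ker\phi \subseteq \{\id, Z_1\}$, and running the case analysis. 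The counting checks out — $[S:S_0]\in\{1,2\}$ since $S$ is abelian, $Z_1 \in S$ forces $S_0=S$ and $|\ker\phi|=2$, and $-Z_1\in S$ is incompatible with any anticommuting element — so the four cases $(Z_1\in S;\ Z_1\notin S,\ -Z_1\notin S,\ S_0=S;\ [S:S_0]=2;\ -Z_1\in S)$ exhaust the possibilities and give $\sigma\in\{1,1,\tfrac12,0\}$ respectively, with $k'\in\{k,k-1,k,-\}$ (and the $k'=k-1$ case can only occur when $k\ge 1$, as it must). The tableau operations you describe for producing generators of $\Pi'$ are the standard $\FF_2$-linear algebra moves and are indeed $\poly(n)$.
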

\begin{fact}[\bf Bipartite Stabilizer Projectors~\cite{bravyi2006ghz}]
\label{fact:ghz}
Suppose $\Pi$ is a stabilizer projector acting on a bipartite system $LR$
where $L$ and $R$ are arbitrary qubit registers.
Then there exist unitary Clifford operators $C_L$ and $C_R$ acting  
on $L$ and $R$ respectively 
such that 
\be
\Pi= (C_L \otimes C_R)^\dag \Pi' (C_L \otimes C_R)
\ee
where $\Pi'$ is a tensor product of  the following one-qubit and two-qubit stabilizer projectors:
\begin{itemize}
\item Single-qubit projectors $|0\ra\la 0|$ and $I$.
\item Two-qubit projectors $|\Phi^+\ra\la \Phi^+|$ where $|\Phi^+\ra = (|00\ra+|11\ra)/\sqrt{2}$.
\item Two-qubit projectors  $|00\ra\la 00|+|11\ra\la 11|$.
\end{itemize}
Each two-qubit projector acts on one qubit in $L$ and one qubit in $R$.
The above decomposition can be computed in time $poly(|L|+|R|)$.
\end{fact}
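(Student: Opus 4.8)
The plan is to abandon the circuit picture and argue in the symplectic (binary) representation of the Pauli group, reducing the statement to a normal form for isotropic subspaces under the product symplectic group $\mathrm{Sp}(W_L)\times\mathrm{Sp}(W_R)$. Write $n_L=|L|$, $n_R=|R|$, $W_L=\FF_2^{2n_L}$, $W_R=\FF_2^{2n_R}$, and let $S\le\mathcal{P}_{LR}$ be the stabilizer group of $\Pi$, so that $|S|=2^r$, $-I\notin S$, and $\Pi=|S|^{-1}\sum_{g\in S}g$. Modulo phases $S$ corresponds to an isotropic subspace $V\subseteq W_L\oplus W_R$ of dimension $r$; a local Clifford $C_L\otimes C_R$ acts on $V$ through a pair of symplectic maps, and this correspondence is surjective and computable in $\poly(n_L+n_R)$ time; and the three target projectors correspond, respectively, to the generator patterns ``$Z$ on one qubit'', ``$\{X_aX_b,\,Z_aZ_b\}$ on a cross pair $a\in L$, $b\in R$'', and ``$Z_aZ_b$ alone on such a cross pair''. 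I would first construct the normal form for $V$ up to the product symplectic group, and then restore $+1$ signs on all generators with a final local Pauli correction: a Pauli supported in $L$ (resp.\ $R$) is itself a Clifford on $L$ (resp.\ $R$), and block by block the required sign flips -- an $X$ on a $|0\ra\la0|$ qubit, or a $Z$, $X$, or $Y$ on a cross-pair qubit -- are each available and affect no other block.

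First I would split off the purely-local part. Let $S_L=S\cap\mathcal{P}_L$ and $S_R=S\cap\mathcal{P}_R$. Using the standard single-system stabilizer normal form (Gottesman--Knill), choose $C_L$ with $C_LS_LC_L^\dag=\langle Z_1,\dots,Z_a\rangle$ on the first $a$ qubits of $L$, and likewise $C_R$ for $S_R$. Every $g\in S$ commutes with each such $Z_i$ and so is $I$ or $Z$ on qubit $i$ of $L$; multiplying $g$ by elements of $S_L$, I may take all remaining generators to be supported away from the first $a$ qubits of $L$, which are therefore completely decoupled and carry exactly the projectors $|0\ra\la0|$. Symmetrically on $R$. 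Deleting these decoupled single-qubit blocks reduces the problem to the case $S_L=S_R=\{I\}$, i.e.\ $V\cap W_L=V\cap W_R=\{0\}$ for the coordinate subspaces $W_L,W_R$.

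The heart of the argument is the remaining non-local case. Now the two coordinate projections $\pi_L\colon V\to W_L$ and $\pi_R\colon V\to W_R$ are injective, so $\phi:=\pi_R\circ\pi_L^{-1}$ is an isomorphism of $\tilde V_L:=\pi_L(V)$ onto $\tilde V_R:=\pi_R(V)$, and isotropy of $V$ forces $\la x,y\ra_{W_L}=\la\phi x,\phi y\ra_{W_R}$, so $\phi$ is an isometry of the alternating forms induced on these subspaces. I would decompose $\tilde V_L=N_L\oplus R_L$ with $R_L$ the radical of the induced form and $N_L$ a nondegenerate complement; then $\dim N_L=2e$ is even, $\dim R_L=c$, $2e+c=r$, and $\phi$ transports this to $\tilde V_R=N_R\oplus R_R$. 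Fix a symplectic basis $a_1,b_1,\dots,a_e,b_e$ of $N_L$ and a basis $c_1,\dots,c_c$ of $R_L$ (note $R_L\subseteq N_L^{\perp}$ inside $W_L$, since $R_L$ is a radical). By Witt's extension theorem over $\FF_2$ together with the single-system normal form, choose $C_L$ sending $a_i\mapsto X_i$, $b_i\mapsto Z_i$ on qubits $1,\dots,e$ of $L$ and $c_j\mapsto Z_{e+j}$ on qubits $e+1,\dots,e+c$; and choose $C_R$ sending the $\phi$-images $\phi(a_i)\mapsto X_i$, $\phi(b_i)\mapsto Z_i$, $\phi(c_j)\mapsto Z_{e+j}$ on the corresponding qubits of $R$. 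A basis vector of $V$ is pinned down by its $\pi_L$-image in $L$ and by its $\pi_R=\phi\circ\pi_L$ image in $R$, so after applying $C_L\otimes C_R$ the group $V$ is generated exactly by the cross pairs $\{X_i^{L}X_i^{R},\,Z_i^{L}Z_i^{R}:i\le e\}\cup\{Z_{e+j}^{L}Z_{e+j}^{R}:j\le c\}$, all other qubits of $L$ and of $R$ being untouched; the count $2e+c=r$ confirms there are no further generators. Assembling the decoupled $|0\ra\la0|$ blocks from the previous step, the $e$ maximally-entangled blocks $|\Phi^+\ra\la\Phi^+|$, the $c$ classically-correlated blocks $|00\ra\la00|+|11\ra\la11|$, and the leftover identity blocks, and then fixing all generator signs to $+1$ via the local Pauli correction, yields the asserted $\Pi=(C_L\otimes C_R)^\dag\Pi'(C_L\otimes C_R)$. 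Every step -- $\FF_2$ Gaussian elimination, the stabilizer normal form, computing $\phi$ and its radical, realizing Witt's theorem, converting symplectic maps to Clifford circuits -- is deterministic and runs in $\poly(n_L+n_R)$ time.

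I expect the main obstacle to be the compatibility of the two single-system normal forms in the last step: one must \emph{not} standardize $C_L$ on $\tilde V_L$ and $C_R$ on $\tilde V_R$ independently, or a cross generator could emerge as, say, $X_i^{L}Z_i^{R}$ rather than $X_i^{L}X_i^{R}$. Threading the isometry $\phi$ through the choice of bases before invoking Witt's theorem is exactly what forces the entangled and classically-correlated patterns to line up. A secondary subtlety is that the familiar pure-state version of this normal form has no radical term $R_L$; here the radical is precisely what produces the rank-$2$ classically-correlated blocks, so keeping the nondegenerate/radical split explicit is essential for the projector (as opposed to pure-state) statement.
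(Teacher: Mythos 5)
The paper does not prove this statement at all: it is imported as a Fact with a citation to Bravyi--Fattal--Gottesman \cite{bravyi2006ghz}, so there is no internal proof to compare against. Your argument is correct and is essentially a self-contained reconstruction of the proof in that reference: work in the binary symplectic picture, split off the local subgroups $S_L=S\cap\mathcal{P}_L$, $S_R=S\cap\mathcal{P}_R$ (giving the $|0\ra\la 0|$ and $I$ factors), and canonicalize the genuinely bipartite remainder via the projection $\pi_L(V)$ with its induced alternating form, the nondegenerate part yielding the $e$ EPR blocks and the radical yielding the $c$ classically-correlated blocks $|00\ra\la 00|+|11\ra\la 11|$, with Witt extension threading the isometry $\phi$ so the two sides line up, and a final local Pauli layer fixing all generator signs. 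The one step you should make explicit is that the target pattern actually fits on the register: you need $e+c\le |L|$ (and likewise for $R$), whereas injectivity of $\pi_L$ only gives $2e+c\le 2|L|$. This follows because $N_L$ is nondegenerate, so $W_L=N_L\oplus N_L^{\perp}$ with $N_L^{\perp}$ symplectic of dimension $2(|L|-e)$, and $R_L$ is an isotropic subspace of $N_L^{\perp}$, hence $c\le |L|-e$; with that line added, the proof is complete, and all steps are indeed $\poly(|L|+|R|)$-time symplectic linear algebra.
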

By definition, $\Pi_{\out}^{(2)}$ is a stabilizer projector
on $n+m+t$ qubits. Fact~\ref{fact:meas} implies that 
\be
\label{Pout3'}
\la 0^m|\Pi_{\out}^{(2)} |0^m\ra =  \gamma 2^{-r} \Pi_{\out}^{(3)}
\ee
for some $\gamma\in \{0,1\}$, integer $r\in \{0,\ldots,m\}$,
and some stabilizer  projector $\Pi_{\out}^{(3)}$
on $n+t$ qubits. From Eqs.~(\ref{Pout2},\ref{Pout3'}) one gets
\be
\label{Pout3}
\pi(U,\psi) =  \gamma 2^{t-r}\| \Pi_{\out}^{(3)} |\psi \otimes A^{\otimes t}\ra\|^2.
\ee
If $\gamma=0$ then $\pi(U,\psi)=0$ for any witness $|\psi\ra$.
Accordingly, one can choose the Clifford unitary $W$ in the statement of the
theorem arbitrarily. From now on we assume that $\gamma=1$.
Apply Fact~\ref{fact:ghz} to the stabilizer projector $\Pi_{\out}^{(3)}$ and
the partition $[n+t]=LR$ where $L$ is the $n$-qubit witness register
and $R$ is the $t$-qubit magic state register. We get
\be
\Pi_{\out}^{(3)} = (C_L \otimes C_R)^\dag \Pi_{\out}^{(4)} (C_L\otimes C_R)
\ee
where $\Pi_{\out}^{(4)}$ is a tensor product of one-qubit  and two-qubit stabilizer projectors from Fact~\ref{fact:ghz}.
Substituting this into Eq.~(\ref{Pout3}) with $\gamma=1$ one gets
\be
\pi(U,\psi ) = 2^{t-r} \| \Pi_{\out}^{(4)} C_L|\psi\ra  \otimes C_R |A^{\otimes t}\ra\|^2.
\ee
Equivalently,
\be
\pi(U,C_L^\dag \psi ) = 2^{t-r} \| \Pi_{\out}^{(4)} |\psi\ra  \otimes C_R |A^{\otimes t}\ra\|^2.
\ee
By definition, the register $R$ contains $t$ qubits. 
Thus $\Pi_{\out}^{(4)}$ may 
contain at most $t$ two-qubit projectors $|\Phi^+\ra\la \Phi^+|$ and $|00\ra\la 00|+|11\ra\la 11|$.
Indeed, each two-qubit projector must have one qubit in $R$, see Fact~\ref{fact:ghz}.
Partition $L=L'L''$ such that each two-qubit projector that appears in $\Pi_{\out}^{(4)}$ 
has one qubit in~$L''$ and the other qubit in $R$. Then
\be
\label{L''size}
|L''|\le t \qquad \text{ and } \qquad
\Pi_{\out}^{(4)} = \Gamma_{L'} \otimes \Lambda_{L''R}
\ee
for some stabilizer projectors $\Gamma$ and $\Lambda$.
Here the subscripts indicate the registers acted upon by each projector.
Furthermore, $\Gamma$ is a tensor product of single-qubit projectors
$|0\ra\la 0|$ and $I$.
Define an operator
\be
\label{P5}
\Pi_{\out}^{(5)} \defeq {}_R\la A^{\otimes t} |C_R^\dag \Lambda_{L''R} C_R |A^{\otimes t}\ra_R
\ee
acting on the register $L''$. Note that $\Pi_{\out}^{(5)}$ is Hermitian and positive
semi-definite (although $\Pi_{\out}^{(5)}$ might not be a projector). 
Then 
\be
\label{eq_aux1}
\pi(U,C_L^\dag \psi ) = 2^{t-r}  \la \psi | \Gamma_{L'} \otimes \Pi_{\out}^{(5)}|\psi\ra.
\ee
Here the tensor product separates $L'$ and $L''$.
It follows that $|\psi\ra$ is an optimal witness such that $\pi(U)=\pi(U,\psi)$ if
\be
\label{eq_aux2}
C_L|\psi\ra = |\phi_{L'}\ra \otimes |\phi_{L''}\ra,
\ee
where
$|\phi_{L'}\ra$ is a $(+1)$-eigenvector of the projector $\Gamma_{L'}$
and $|\phi_{L''}\ra$ is an eigenvector of $\Pi_{\out}^{(5)}$ with the maximum eigenvalue.
From \cref{L''size} one infers that $|\phi_{L''}\ra$ is a state of at most $t$ qubits.
Since $\Gamma_{L'}$ is a product of $\ketbra{0}$ and $\II$ terms, divide $L'$ into $L_1'$ and $L_2'$ such that
\begin{align}
    \Gamma_{L'} = \ketbra*{0^{\abs{L_1'}}}_{L_1'} \otimes \II_{L_2'}.
\end{align}
Then, the minimizing $\phi_{L'}$ has the form $\phi_{L'} =  \ket{0^{\abs{L_1'}}}_{L_1'} \otimes \ket{\junk}_{L_2'}$ for any state $\ket{\junk}$. To conclude, one can choose an optimal witness state $|\psi\ra$ such that
\be
|\psi\ra = C_L^\dagger \left(|0^{|L_1'|}\ra \otimes \ket{\junk} \otimes |\phi_{L''}\ra \right)
\ee
for some $(\leq t)$-qubit
state $|\phi_{L''}\ra$ and some Clifford operators $C_L$.
This is equivalent to the statement of the theorem. Additionally observe that all the above steps necessary to obtain $W$ can be implemented efficiently since they only involve manipulations with  Clifford circuits and stabilizer projectors. 
\end{proof}

\thmmainalgorithm*

\begin{proof}
First let us introduce some notations.
Let $\calH(n)$ be the set of all normalized $n$-qubit
pure states. 
Given a hermitian $n$-qubit operator $M$, let 
$\lambda_{max}(M)=\max_{\psi \in \calH(n)} \la \psi|M|\psi\ra$ be the maximum eigenvalue of $M$.
Define two-qubit projectors 
\[
\Gamma^{(1)} = |\Phi^+\ra\la \Phi^+|  \quad \mbox{and} \quad 
\Gamma^{(2)}  =  |00\ra\la 00|+|11\ra\la 11|.
\]
Recall that $|\Phi^+\ra = (|00\ra+|11\ra)/\sqrt{2}$.
Given a pair of disjoint $k$-qubit registers $R$ and $L$, let  
$\Gamma^{(i)}_{RL}$ be a $(2k)$-qubit projector
that applies
$\Gamma^{(i)}$ to the $j$-th qubit of $R$ and the $j$-th qubit of $L$ for each $j=1,\ldots,k$.
Below we follow notations introduced in the proof of Theorem~\ref{thm:main}.
Our starting point is the expression for the quantity $\Val$ derived in Eqs.~(\ref{P5},\ref{eq_aux1},\ref{eq_aux2}) thereof, namely
\be
\label{val_repeated1}
\Val = 2^{t-r} \lambda_{max}(\Pi_{\out}^{(5)}), 
\ee
where $\Pi_{out}^{(5)}$ is a positive semi-definite
operator defined in Eq.~(\ref{P5}), namely
\be
\label{val_repeated2}
\Pi_{\out}^{(5)} = {}_R\la A^{\otimes t} |C_R^\dag \Lambda_{L''R} C_R |A^{\otimes t}\ra_R.
\ee
Recall that $R$ and $L''$ are disjoint qubit registers such that $|R|=t$, $|L''|\le t$,
$C_R$ is some Clifford operator
acting on $R$, and $\Lambda_{L''R}$ is a 
product of one- and two-qubit stabilizer
projectors from Fact~\ref{fact:ghz} such that 
each one-qubit projector acts on $R$ and
each two-qubit projector acts on both $R$ and $L''$.
In other words, $\Lambda_{L''R}$ can be written as
\be
\Lambda_{L''R} =\Gamma^{(1)} _{L_1 R_1}
\Gamma^{(2)} _{L_2 R_2} |0\ra\la 0|_{R_3} 
\II_{R_4}
\ee
for some
partitions $L''=L_1 L_2$ and $R=R_1R_2R_3R_4$
with  $|R_1|=|L_1|$ and $|R_2|=|L_2|$.
Below we use notations
$r_i=|R_i|$ and $\ell_i=|L_i|$.

We claim that  $\Pi_{\out}^{(5)}$ 
commutes with any Pauli operator $Z_j$, $j\in L_2$.
Indeed, it suffices to check that $Z_j$ commutes
with $\Lambda_{L''R}$. The latter acts on the register
$L_2$ by a diagonal operator $\Gamma^{(2)}_{L_2R_2}$
which commutes with $Z$-type Pauli operators
confirming the claim. 
Thus one can choose the maximum eigenvector $\psi$ of  $\Pi_{\out}^{(5)}$
such that $Z_j |\psi\ra = (-1)^{\sigma_j}|\psi\ra$ for all $j\in L_2$
and some unknown $\sigma\in \{0,1\}^{\ell_2}$.
Equivalently, $|\psi\ra = |\psi'\ra_{L_1} \otimes|\sigma\ra_{L_2}\equiv |\psi'_{L_1} \otimes \sigma_{L_2}\ra$ for some $\psi'\in \calH(\ell_1)$.
Using Eq.~(\ref{val_repeated1}) one gets
\be
\label{val3}
\Val = 2^{t-r}
\max_{\psi\in \calH(\ell_1)}\;
\max_{\sigma\in \{0,1\}^{\ell_2}}\;
\la \psi_{L_1} \otimes \sigma_{L_2}|
\Pi_{\out}^{(5)}| \psi_{L_1}\otimes \sigma_{L_2}\ra.
\ee
We shall discard the register $L_2$ using the identity
\[
{}_{L_2} \la \sigma | \Gamma^{(2)}_{L_2 R_2} |\sigma\ra_{L_2} = |\sigma\ra\la\sigma|_{R_2}.
\]
Substituting this identity into  Eq.~(\ref{val3}) gives 
\be
\label{val4}
\Val = 2^{t-r}
\max_{\sigma\in \{0,1\}^{\ell_2}}\;
\lambda_{max}(\Pi_{out}^{(6)}(\sigma))
\ee
where $\Pi_{out}^{(6)}(\sigma)$ is a
positive semi-definite operator
acting on the register $L_1$ defined as
\be
\label{P6}
\Pi_{out}^{(6)}(\sigma)= {}_R\la A^{\otimes t} |C_R^\dag \Gamma^{(1)} _{L_1 R_1}
|\sigma\ra\la \sigma|_{R_2} |0\ra\la 0|_{R_3} 
\II_{R_4}
C_R |A^{\otimes t}\ra_R.
\ee
We shall discard the register $L_1$
using the quantum teleportation identity 
\[
{}_{L_1} \la \psi |  \Gamma^{(1)} _{L_1 R_1} |\psi\ra_{L_1} = 2^{-\ell_1} |\psi^*\ra\la \psi^*|_{R_1}
\]
which holds for any state $\psi \in 
\calH(\ell_1)$. Here $\psi^*$ is the complex conjugate of $\psi$ in the standard basis of $\ell_1$ qubits. Using the teleportation identity 
and the definition of $\Pi_{out}^{(6)}(\sigma)$
one can check that
\be
\label{val5}
\la \psi|\Pi_{out}^{(6)}(\sigma)|
\psi\ra =
 2^{-\ell_1} 
\la A^{\otimes t} |C_R^\dag |\psi^*\ra\la \psi^*|_{R_1} 
|\sigma\ra\la \sigma|_{R_2} |0\ra\la 0|_{R_3} 
\II_{R_4}
C_R |A^{\otimes t}\ra_R
\ee
for any state $\psi \in \calH(\ell_1)$.
At this point both registers $L_1$ and $L_2$
have been discarded.
After some algebra 
one can rewrite Eq.~(\ref{val5}) as 
\be
\label{val6}
\la \psi|\Pi_{out}^{(6)}(\sigma)|
\psi\ra =
 2^{-\ell_1} \| {}_{R_1} \la \psi^* |\varphi(\sigma)\ra_{R_1R_4}\|^2,
\ee
where
$|\varphi(\sigma)\ra$ is a state of $R_1R_4$ defined as
\be
\label{varphi(sigma)}
|\varphi(\sigma)\ra= {}_{R_2R_3} \la \sigma_{R_2},0_{R_3}| C_R |A^{\otimes t}\ra_R.
\ee
Since the set $\calH(\ell_1)$ is closed under
the complex conjugation, Eqs.~(\ref{val4},\ref{val6}) give
\be
\label{val7}
\Val = 2^{t-r-\ell_1}
\max_{\sigma\in \{0,1\}^{\ell_2}}\;
\max_{\psi \in \calH(\ell_1)}\; 
\| {}_{R_1} \la \psi |\varphi(\sigma)\ra_{R_1R_4}\|^2.
\ee
At this point the only remaining registers
are $R_1$ and $R_4$.
Clearly, the optimal state $\psi \in \calH(\ell_1)$
that achieves the maximum in Eq.~(\ref{val7})
coincides with the largest eigenvector
of a reduced density matrix
\[
\rho_{R_1}(\sigma) = \mathrm{Tr}_{R_4} |\varphi(\sigma)\ra\la \varphi(\sigma)|.
\]
Thus Eq.~(\ref{val7}) is equivalent to
\be
\label{val8}
\Val = 2^{t-r-\ell_1}
\max_{\sigma\in \{0,1\}^{\ell_2}}\;
\lambda_{max}(\rho_{R_1}(\sigma)).
\ee
Recall that any $t$-qubit Clifford operator
can be efficiently compiled to a Clifford circuit
with $O(t^2)$ one- and two-qubit gates
~\cite{aaronson2004improved}.
Thus one can compute a $t$-qubit state $|\phi\ra:=C_R |A^{\otimes t}\ra_R$
as a vector of complex amplitudes
in time $\poly(t) 2^t$ using the standard state vector simulator of quantum circuits.
We assume that $\phi$ is stored in a classical RAM memory for the rest of the algorithm such that 
any amplitude of $\phi$ can be accessed in time $\poly(t)$.
By definition, $|\varphi(\sigma)\ra$ is obtained from $|\phi\ra$
by  projecting the registers $R_2R_3$ onto the basis state $|\sigma_{R_2},0_{R_3}\ra$, see Eq.~(\ref{varphi(sigma)}).
Equivalently,  $|\varphi(\sigma)\ra$ is obtained from $|\phi\ra$
by selecting a subset of $2^{t-r_2-r_3}$ amplitudes.
Thus
one can compute 
$|\varphi(\sigma)\ra$ as a vector of complex amplitudes
in time $poly(t)2^{t-r_2-r_3}$ for any given $\sigma$.
By definition, $|\varphi(\sigma)\ra$ is 
a state of $t-r_2-r_3$ qubits.
We shall use the following fact.
\begin{lemma}
\label{lemma:power_method}
Suppose $|\varphi\ra$ is a pure $n$-qubit state
specified as a vector of complex amplitudes
and $\delta>0$ is a precision parameter.
Consider a partition $[n]=AB$, where $A$ and $B$ are disjoint qubit registers.
Let $\rho_A = \mathrm{Tr}_B |\varphi\ra\la \varphi|$ be the reduced density matrix of $A$.
There exists a classical randomized algorithm that
runs in time $O(\delta^{-1} n 2^n)$
and outputs a real random variable $\xi$ such that 
\be
0\le \xi \le \lambda_{max}(\rho_A)
\quad \mbox{and} \quad
\mathrm{Pr}[\xi \ge (1-\delta) \lambda_{max}(\rho_A)]
\ge \frac{99}{100}.
\label{power_method}
\ee
\end{lemma}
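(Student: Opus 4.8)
The plan is to estimate $\lambda_{max}(\rho_A)$ by running the randomized power method of Kuczy\'nski and Wo\'zniakowski~\cite{kuczynski1992estimating} on $\rho_A$ \emph{implicitly}, never writing $\rho_A$ down as a matrix. First I would reshape the given amplitude vector of $|\varphi\ra$: writing $|\varphi\ra=\sum_{a\in\bits^{|A|}}\sum_{b\in\bits^{|B|}}\Phi_{a,b}\,|a\ra_A\otimes|b\ra_B$, the matrix $\Phi\in\CC^{2^{|A|}\times 2^{|B|}}$ satisfies $\rho_A=\Phi\Phi^\dagger$, so $\lambda_{max}(\rho_A)$ equals the squared largest singular value of $\Phi$. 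The point of this factored form is that for any $v\in\CC^{2^{|A|}}$ one can compute $\rho_A v=\Phi(\Phi^\dagger v)$ in $O(2^{|A|}\,2^{|B|})=O(2^n)$ arithmetic operations --- first the dense product $w:=\Phi^\dagger v$, then $\Phi w$ --- so a matrix--vector multiplication by $\rho_A$ costs only $O(2^n)$ even though $\rho_A$ itself has $2^{2|A|}$ entries.

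Next I would run the power method on $M:=\rho_A$, a Hermitian positive semidefinite operator of dimension $N:=2^{|A|}\le 2^n$. Sample $v_0\in\CC^N$ with i.i.d.\ complex Gaussian entries; for $j=1,\dots,k$ set $v_j:=Mv_{j-1}/\|Mv_{j-1}\|$, outputting $\xi:=0$ in the probability-zero event that $M=0$ or some $Mv_{j-1}$ vanishes; finally output $\xi:=\la v_k|M|v_k\ra=\|\Phi^\dagger v_k\|^2$. The inequality $0\le\xi\le\lambda_{max}(\rho_A)$ holds unconditionally, because $\xi$ is a Rayleigh quotient of a positive semidefinite operator at a unit vector. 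For the lower bound, the standard analysis of the power method with a random start~\cite{kuczynski1992estimating} (stated for real symmetric matrices, which covers the Hermitian case, e.g.\ after doubling the dimension) guarantees that taking $k=O(\delta^{-1}\log N)=O(\delta^{-1}n)$ iterations yields $\Pr[\xi\ge(1-\delta)\lambda_{max}(M)]\ge 99/100$. Since each iteration costs $O(2^n)$ by the reshaping step, and sampling $v_0$ and evaluating the final Rayleigh quotient cost $O(2^{|A|})\le O(2^n)$, the total runtime is $O(\delta^{-1}n\,2^n)$.

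The main obstacle --- and essentially the only nontrivial point --- is that $|A|$ may be as large as $n$, so a dense representation of $\rho_A$ would need up to $2^{2n}$ numbers and blow the budget; it is precisely the identity $\rho_A=\Phi\Phi^\dagger$ together with the implicit matrix--vector products that keeps every step within $O(2^n)$. The remaining details are routine: the degenerate cases are handled by outputting $0$ (harmless, since then $\lambda_{max}(\rho_A)=0$ or the event has probability zero); $\poly(n)$-bit precision arithmetic suffices because renormalizing after each step prevents error accumulation; and the same guarantees hold for a start drawn uniformly from the unit sphere rather than a Gaussian.
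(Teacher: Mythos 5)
Your proposal is correct and follows essentially the same route as the paper: factor $\rho_A=\Phi\Phi^\dagger$ via reshaping the amplitude vector (the paper's matrix $\eta$), do implicit matrix--vector products in $O(2^n)$ time, and run the Kuczy\'nski--Wo\'zniakowski randomized power method for $O(\delta^{-1}n)$ iterations, with the one-sided bound $0\le\xi\le\lambda_{max}(\rho_A)$ coming from $\xi$ being a Rayleigh quotient. The only cosmetic differences are that the paper draws a Haar-random start and converts the expected relative-error bound to the $99/100$ probability guarantee explicitly via Markov's inequality, whereas you invoke the randomized-start guarantee directly; both yield the stated runtime and success probability.
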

\begin{proof}
Assume
wlog that $|A|\le |B|$ (otherwise switch $A$ and $B$).
We have $\rho_A = \eta\eta^\dag$, where $\eta$ is a matrix of size $2^{|A|}\times 2^{|B|}$
with matrix elements $\la x|\eta|y\ra = \la x_A y_B|\varphi\ra$.
Given a list of amplitudes of $|\varphi\ra$, one can can compute the matrix of $\eta$
in time $O(2^n)$ since $|A|+|B|=n$. For any $|A|$-qubit state $|v\ra$ the matrix-vector product
$|v\ra \to \eta^\dag |v\ra$ can be computed in time $O(2^n)$.
Likewise, for any $|B|$-qubit state $|w\ra$ the matrix-vector product
$|w\ra \to \eta |w\ra$ can be computed in time $O(2^n)$.
We conclude that the matrix-vector product $|v\ra \to \rho_A|v\ra$
can be computed in time $O(2^n)$. 

We shall compute an estimator $\xi$
satisfying Eq.~(\ref{power_method})
using 
the power method with a random starting state~\cite{kuczynski1992estimating}.
Namely, let $|\phi\ra \in \calH(|A|)$
be a Haar-random state of $A$.
Given a number of iterations $q\ge 2$, the power method outputs an estimator
\[
\xi_q = \frac{\la \psi_q |\rho_A |\psi_q\ra}{\la \psi_q|\psi_q\ra}, \qquad |\psi_q\ra:=\rho_A^{q-1}|\phi\ra.
\]
Clearly, computing $\xi_q$ requries
$q$ matrix-vector multiplications
for the matrix $\rho_A$.
Thus the runtime 
scales as $O(q2^n)$.
Note that $0\le \xi_q\le \lambda_{max}(\rho_A)$
with certainty. 
Theorem~3.1 of~\cite{kuczynski1992estimating}
guarantees that the relative error
\[
\epsilon_q:=\frac{\lambda_{max}(\rho_A)-\xi_q}{\lambda_{max}(\rho_A)}
\]
obeys
\be
{\mathbb E}(\epsilon_q) \le 
\frac{0.871 \log{{(2^{|A|})}}}{q-1}
\le \frac{n}{q}
\ee
for all $q\ge 2$.
Here the expectation value is taken over the
random starting state $\phi$ and we use the natural logarithm.
Since $\epsilon_q$ is a non-negative random variable,
Markov inequality implies that 
$\epsilon_q \le 100\cdot {\mathbb E}(\epsilon_q)$
with the probability at least $99/100$.
Thus the desired estimator $\xi$ satisfying 
Eq.~(\ref{power_method})
can be chosen
as $\xi=\xi_q$ with $q= \lceil 100 n/\delta \rceil$.
\end{proof}

Applying Lemma~\ref{lemma:power_method} to the state $|\varphi(\sigma)\ra$ of $n=t-r_2-r_3$ qubits
with the registers $A=R_1$ and $B=R_4$
one obtains an estimator $\xi(\sigma)$
satisfying
\be
\label{xi_bound}
0\le \xi(\sigma) \le \lambda_{max}(\rho_{R_1}(\sigma))
\quad \mbox{and} \quad
\mathrm{Pr}[ \xi(\sigma) \ge
(1-\delta) \lambda_{max}(\rho_{R_1}(\sigma))]
\ge 99/100.
\ee
The runtime required to 
compute the estimator $\xi(\sigma)$ for any fixed  $\sigma$ is $O(\delta^{-1} t 2^{t-r_2 -r_3})$.
Thus computing the estimators $\xi(\sigma)$ for all
$\sigma \in \{0,1\}^{\ell_2}$ 
takes time $O(\delta^{-1} t 2^{t})$. Here we noted that $r_2=\ell_2$.
We choose the desired estimator $\xi$
approximating the quantity $\Val$ as
\be
\label{xi_def}
\xi = 2^{t-r-\ell_1} \max_{\sigma\in \{0,1\}^{\ell_2}}\; \xi(\sigma).
\ee
Let $\sigma^*\in  \{0,1\}^{\ell_2}$ be the optimal bit string that achieves the maximum
in Eq.~(\ref{val8}) such that 
\be
\label{val9}
\Val = 2^{t-r-\ell_1}
\lambda_{max}(\rho_{R_1}(\sigma^*)).
\ee
From Eq.~(\ref{xi_bound}) one infers
that $\xi \le \Val$ with certainty and
\[
\mathrm{Pr}[\xi\ge (1-\delta)\Val]
\ge 
\mathrm{Pr}[\xi(\sigma^*) \ge (1-\delta)\lambda_{max}(\rho_{R_1}(\sigma^*)]\ge 99/100.
\]
We conclude by noting
that all manipulations  performed in the proof of Theorem~\ref{thm:main} to compute
the Clifford circuit $C_R$
and the stabilizer projector $\Lambda_{L''R}$
can be implemented 
in time $\poly(n,m,s,t)$ using the standard stabilizer
formalism~\cite{gottesman-knill-theorem}.
Thus the total runtime required to compute 
the desired estimator $\xi$ is
\[
\poly(n,m,s,t) + O(\delta^{-1} t 2^{t}).
\]
\end{proof}

In Appendix \ref{sec:alternate-thm} (Theorem \ref{thm:alternate-alg}) we give an alternative algorithm for solving the same problem as Theorem \ref{thm:main-algorithm-version} but using slightly different techniques. It has some advantages over Theorem \ref{thm:main-algorithm-version} which we elaborate in Appendix \ref{sec:alternate-thm}.

\section{Implied lower bounds from the Exponential-Time Hypothesis}
\label{sec:eth}

The Exponential-Time Hypothesis ($\ETH$), introduced by Impagliazzo and Paturi~\cite{impagliazzo2001complexity}, is the conjecture that, informally, (classical) $k$-SAT requires exponential (classical) time.

\begin{definition}[Exponential-Time Hypothesis~\cite{impagliazzo2001complexity}]
Let 
\begin{equation}
s_k \defeq \inf \left\{ \delta: \text{there exists $2^{\delta n}$-time algorithm for solving $k$-SAT}\right\}.
\end{equation}
Then $s_k > 0$ is a constant for all $k \geq 3$.
\end{definition}
It is a stronger assumption than $\P \neq \NP$, which implies just that $k$-SAT requires superpolynomial-time.
We show that $\ETH$, together with \cref{thm:main-algorithm-version}, implies $T$-count lower bounds, which is \cref{cor:eth-implification-t-count}.

\corethtcount*

\begin{proof}
Consider a family of Clifford + $T$-gate $\QMA$ verifier circuits and let $f(n)$ be the number of~$T$ gates in the circuits for $n$-qubit witnesses.
Suppose that $f(n) = o(n)$.
Then by \cref{thm:main-algorithm-version}, each instance can be solved in $O\left(\poly(n) 2^{o(n)}\right)$ deterministic classical time, which contradicts $\ETH$.
The theorem follows from the fact that $\NP \subseteq \QMA$.
\end{proof}

Second, we show that Theorem \ref{thm:main-algorithm-version} along with $\ETH$ implies a linear lower-bound on the $T$-count complexity of generating a $W$ state. This is Corollary \ref{cor:eth-implication-w-state}.

\corethwstate*

The proof will use the following fact, due to Barahona~\cite{barahona1982computational}:
\begin{theorem}[\cite{barahona1982computational}]
\label{thm:positive-quadratic-np-hard}
Estimating, to within inverse polynomial additive error, the ground state energy of the following class of Hamiltonians is NP-hard:
\begin{equation}\label{eq:positive-quadratic-diagonal-hamiltonian}
H = \sum_{\{i, j\} \in E} Z_i Z_j + \sum_{i \in V} Z_i,
\end{equation}
where $G = (V, E)$ is a planar graph with maximum degree 3.
\end{theorem}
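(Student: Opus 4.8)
The plan is to exhibit a polynomial-time reduction from \emph{maximum independent set on cubic planar graphs}, a classically $\NP$-hard problem. The starting observation is that $H$ is diagonal in the computational basis: its eigenvalues are exactly the quantities $\sum_{\{i,j\}\in E} z_i z_j + \sum_{i\in V} z_i$ as $z$ ranges over $\{\pm 1\}^V$ (with $z_i$ the $\pm1$-eigenvalue of $Z_i$), so estimating the ground energy means estimating the integer $\lambda_{\min}(H)$, which has magnitude $O(|V|)$. My reduction will map ``yes'' and ``no'' instances of independent set to instances of this problem whose ground energies differ by at least $4$, so distinguishing them only needs additive error below $2$ — comfortably within inverse-polynomial precision for all but finitely many (brute-forceable) instances. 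Hence it suffices to realize independent set inside the optimization $\min_{z}\big(\sum_{\{i,j\}\in E} z_i z_j + \sum_{i\in V} z_i\big)$ on cubic planar graphs.

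For the core computation I would take a cubic planar graph $G=(V,E)$, set $n=|V|$ (so $|E|=3n/2$), and for a configuration $z$ write $S=\{i:z_i=+1\}$. Classifying each edge by whether it crosses the cut $(S,V\setminus S)$ gives $\sum_{\{i,j\}\in E} z_iz_j = |E|-2\,e_{\mathrm{cut}}(S)$, and $\sum_{i} z_i = 2|S|-n$, hence
\[
\lambda_{\min}(H) = |E| - n - 2\max_{S\subseteq V}\big(e_{\mathrm{cut}}(S)-|S|\big).
\]
By $3$-regularity $e_{\mathrm{cut}}(S)=3|S|-2\,e(S)$, where $e(S)$ counts edges inside $S$, so $e_{\mathrm{cut}}(S)-|S|=2(|S|-e(S))$. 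The one non-mechanical step is the identity $\max_{S}(|S|-e(S))=\alpha(G)$, the independence number: if $S$ contains an edge $\{u,v\}$ then deleting $u$ lowers $|S|$ by $1$ and $e(S)$ by at least $1$, so $|S|-e(S)$ does not decrease; iterating drives $S$ to an independent set at no loss, and there $|S|-e(S)=|S|$. Combining, $\lambda_{\min}(H)=\tfrac n2-4\alpha(G)$, i.e.\ $\alpha(G)=\tfrac18\big(\tfrac n2-\lambda_{\min}(H)\big)$; so a polynomial-time ground-energy estimator would decide whether $\alpha(G)\ge k$, which is impossible unless $\P=\NP$.

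The only external input is the $\NP$-hardness of independent set on cubic planar graphs. If one insists on a self-contained derivation, I would obtain it from the classical $\NP$-hardness of independent set / vertex cover on planar graphs of maximum degree $3$ (Garey--Johnson--Stockmeyer) by composing with a standard degree-equalizing gadget — chains or ``necklaces'' of triangles attached at low-degree vertices — that preserves planarity and the degree bound, makes the graph $3$-regular, and shifts $\alpha$ by an explicitly computable amount; alternatively, following Barahona, I would reduce directly from Ising ground states on an arbitrary (non-planar) graph using a planar \emph{crossing gadget} that replaces each edge crossing by a bounded planar patch, which is precisely where the field term $\sum_i Z_i$ becomes indispensable, since planar Ising with no field is solvable in polynomial time. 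I expect this gadget bookkeeping — checking planarity and the degree bound and nailing down the exact effect on the optimum — to be the main obstacle; by contrast, the algebraic reduction of $\lambda_{\min}(H)$ to $\alpha(G)$ above is routine once the input graph is cubic.
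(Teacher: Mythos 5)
The paper does not actually prove this statement---it is imported verbatim from Barahona \cite{barahona1982computational}---so the relevant comparison is with that source, and your reduction is essentially Barahona's original one: diagonalize $H$, rewrite the energy of a spin configuration $z$ with $S=\{i: z_i=+1\}$ as $|E|-2e_{\mathrm{cut}}(S)+2|S|-n$, use $3$-regularity to convert $e_{\mathrm{cut}}(S)-|S|$ into $2(|S|-e(S))$, and observe that $\max_S(|S|-e(S))=\alpha(G)$, so that a ground-energy estimate with constant additive error decides independent set on cubic planar graphs. The algebra checks out except for one harmless slip: from $\lambda_{\min}(H)=\tfrac n2-4\alpha(G)$ you should get $\alpha(G)=\tfrac14\bigl(\tfrac n2-\lambda_{\min}(H)\bigr)$, not $\tfrac18$; the yes/no gap of $4$ and the conclusion are unaffected. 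The only genuine external dependency is $\NP$-hardness of maximum independent set on $3$-regular planar graphs (hardness on this subclass of course implies hardness for the stated max-degree-$3$ class); this is a standard known fact, and your fallback plan (degree-equalizing gadgets on top of Garey--Johnson--Stockmeyer, with explicit bookkeeping of the shift in $\alpha$) is the right way to make it self-contained, though as you note that gadget verification is where the remaining work lies. Your side remark that the field term $\sum_i Z_i$ is essential, since zero-field planar Ising is polynomial-time solvable, is also correct and is exactly the point of Barahona's planar-with-field result.
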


\begin{proof_of}{\cref{cor:eth-implication-w-state}}
Consider a circuit $V$ starting from all-zeroes that outputs $\ket{W_m} \otimes \ket{\junk}$ with $t$ T gates.
We will show that this implies a $2^{O(t)}$-time classical algorithm for $k$-SAT, thus proving the \namecref{cor:eth-implication-w-state} by contradiction.

Let $H'$ be a diagonal Hamiltonian of the form of \cref{eq:positive-quadratic-diagonal-hamiltonian} with $m' = O(n)$ terms, where $n$ is the number of vertices in the graph.
Let $H = H' - m' \leq 0$ be $H'$ shifted by a constant so that it's negative semidefinite.
It will be convenient to write $H$ as a sum of $m = 2m'$ terms:
\begin{equation}
H = \sum_{i=1}^m H_i = \sum_{i =1}^{m'} Z_{S_i} - \sum_{i=m'}^{2m'} 1
\end{equation}
where $S_i$ is a set of one or two indices and $Z_S = \prod_{i \in S} Z_i$.

Let $C$ be the circuit consisting of $m$ gates constructed in the following way.
It acts on an $m$-qubit ``control'' register A and an $n$-qubit ``computational'' register B.
For $1 \leq i \leq m'$, each gate $C_i$ implements the $i$-th term of $H$ (either $Z$ or $ZZ$) on the corresponding qubits of the computational register, controlled on the $i$-th qubit of the control register. Because each term of the Hamiltonian is Pauli, the controlled version is Clifford, and so $C$ is a Clifford operator.
For $m' < i \leq 2m'$, the corresponding gate is simply $Z_i$ (in the control register).

Let $U \defeq (V^{\dagger} \otimes I) \cdot C \cdot (V \otimes I)$; $U$ has $2t$ T gates. 
The probability of measuring all zeros on the control register given the input state $U \ket{0^m, \psi}_{A, B}$ is
\begin{xalign}
&\Tr\left[\ketbra{0^m}_A \otimes I_B V^{\dagger} CV\ketbra{0^m, \psi}_{A, B} V^{\dagger} C^{\dagger} V \right]\\
&=\Tr\left[\ketbra{W^m}_A \otimes I_B C\ketbra{W^m, \psi}_{A, B} C^{\dagger}\right]\\
&=\frac1{m^2} \sum_{i, j, k, \ell} \Tr\left[\ket{e_k} \bra{e_{\ell}}_A I_B C  \ket{e_i, \psi} \bra{e_j, \psi}_{A, B} C^{\dagger}\right]\\
&=\frac1{m^2} \sum_{i, j, k, \ell} \Tr\left[\ket{e_k} \bra{e_{\ell}}_A I_B H_i\ket{e_i, \psi} \bra{e_j, \psi}_{A, B} H_j\right]\\
&=\frac1{m^2} \sum_{i, j} \Tr\left[H_i\ketbra{\psi}_{B} H_j\right]=\frac1{m^2} \sum_{i, j} \braket{\psi| H_i H_j| \psi} =\frac1{m^2} \braket{\psi| H^2| \psi}.
\end{xalign}
Because $H$ is negative semidefinite, the state $\psi$ that maximizes the probabality of measuring all zeros as above also minimizes $\braket{\psi | H | \psi}$. By \cref{thm:main-algorithm-version}, such a $W$-state preparation circuit $V$ with $t$ $T$-gates implies a $2^{O(t)}$-time classical algorithm for solving $k$-SAT.
\end{proof_of}

\section{The complexity of quantum non-identity check}
\label{sec:nic}

\begin{definition}[Non-Identity Check \cite{nic-def}]
An instance of the non-identity check ($\NIC$) problem is a classical description of a quantum circuit $U$ and two real numbers $a, b$ such that $b > a$ with the promise~that 
\begin{align}
    d_{\II}(U) \defeq \min_\phi \norm{U - e^{i\phi}\cdot  \II}
\end{align}
is either at most $a$ or at least $b$. 
The instance is called a $\yes$ instance if
$d_{\II}(U)\ge b$ and a $\no$ instance 
if~$d_{\II}(U)\le a$.
\end{definition}

In this section, we present two scenarios in which the non-identity check problem becomes trivial to solve. It was proved by Ji and Wu \cite{nic-ji-wu}, that the problem is in general $\QMA$-hard for depth 2 circuits over qudits when $b - a = 1/\poly(n)$ and the quantum gates are specified to $\Omega(\log n)$ bits of precision. We first show that $\NIC$ is solvable in $\P$ when the entire circuit is Clifford regardless of the depth of the circuit. Second, we show that if $a = o(1)$ is a sub-constant function, then the problem is in $\P$ for constant-depth circuits built from a finite gate set.

\subsection{Clifford circuits}

\thmnic*

\begin{proof}
In order to see this, let $C$ be the unknown Clifford circuit for which we need to determine whether
$\|C-\II\|\geq \alpha$ or $\|C-\II\|\leq \beta$ for $\alpha-\beta \geq 1/\poly(n)$. To this end, consider a Hamiltonian $H=(\II-C)^\dagger (\II-C)=2\II-C-C^\dagger$, whose norm satisfies $\|H\|=\|\II-C\|^2\leq 4$. In order to solve $\NIC(C)$, we need to decide if $\|H\|\geq \alpha^2$ or $\|H\|\leq \beta^2$. In the former case observe that $\Tr(H^p)\geq \alpha^{2p}$ and in the latter case we have $\Tr(H^p)\leq 2^n\cdot \beta^{2p}$. Since $\alpha-\beta \geq 1/\poly(n)$, it suffices to pick $p=\poly(n)$, in order to satisfy $\alpha^{2p}\geq 2\cdot 2^n\cdot \beta^{2p}$. Hence if an algorithm could estimate $\Tr(H^p)$ well enough, then it can distinguish if $C$ satisfies the Yes or No instance of $\NIC$ problem. 

We now show how to compute $\Tr(H^p)$ for $p=\poly(n)$  exactly and efficiently. In this direction, observe that we can express $H^p$ in terms of sum of Clifford powers, i.e.,
\begin{align}
\label{eq:traHp}
\Tr(H^p)=\Tr\big(\sum_{i=-p}^pa_{i}C^{i}\big)=\sum_{i=-p}^pa_{i}\Tr(C^{i})
\end{align}
for some coefficients $a_i\in \R$. Furthermore, since $H$ assumes a simple form $H=\II-C$, the $(2p+1)$ coefficients $a_i$ can be computed explicitly in $\poly(n)$ time. Now, it remains to compute $\Tr(C^i)$ for each one of the $2p+1$ terms. The trace of a Clifford power can be computed exactly by observing that $\Tr(C^i)=2^n\langle \Phi | \II \otimes C^i \vert \Phi \rangle$ where $\ket{\Phi}=\frac{1}{\sqrt{2^n}}\sum_i \ket{i,i}$ is the maximally entangled state on $(2n)$-qubits. Observe that since $\ket{\Phi}$ is a stabilizer state, we have that $\ket{\psi}=\II \otimes C^i  \ket{\Phi}$  is also a $(2n)$-qubit stabilizer state. It remains to estimate inner product between two stabilizer states $\ket{\Phi}$ and $\ket{\psi}$.
It is known that the inner product between any $n$-qubit stabilizer states
can be computed exactly (including the overall phase) in time $O(n^3)$, see~\cite{garcia2012efficient,bravyi2016improved}. Therefore, one can compute each one of the $\Tr(C^i)$ and $a_i$ in \cref{eq:traHp} in time $\poly(n)$ and overall since $p=\poly(n)$, we can exactly compute $\Tr(H^p)$ in time $\poly(n)$. This suffices to decide if $\Tr(H^p)\geq \alpha^{2p}$ (the YES instance of $\NIC$) or $\Tr(H^p)\leq 2^n\cdot \beta^{2p}$ (the NO instance of $\NIC$) for $\alpha-\beta\geq 1/\poly(n)$.
\end{proof}

\subsection{Constant-sized gate sets}

In this section, we are going to show that the $\NIC$ problem for $a = o(1)$, is in $\P$ if we restrict ourselves to circuits of constant depth and a constant-sized gate set. Curiously, the problem was shown to be $\QMA$-hard when either we use an arbitrary gate set or constant-sized gate sets, but we allow circuits of $\Omega(\log n)$-depth \cite{nic-ji-wu}.

\thmfiniteset*

For this proof, we will need a few definitions and preliminary lemmas which we list here first and prove after the proof of the theorem. First, we will need a wonderful fact about low-depth circuits that the reduced density matrix $\tr_{-i} U\psi U^\dagger$ only depends on the lightcone of the $i$th qudit. 

\begin{fact}
Consider a quantum state $\psi$ on $n$ qudits and $U$ a quantum circuit. For any subset $A$ of the qudits, let $L_A$ be the support of the lightcone of $A$ with respect to $U$. Then,
\begin{align}
    \tr_{-A}(U \psi U^\dagger) = \tr_{-A} \left(  U_{L_A} \left( \psi_{L_A} \otimes \nu_{-L_A} \right) U_{L_A}^\dagger \right)
\end{align}
where $\nu$ is the maximally mixed quantum state and $U_{L_A}$ the circuit restricted to gates contained in $L_A$.
\end{fact}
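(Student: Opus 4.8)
The plan is to prove this by the usual causal-cone bookkeeping together with one commutation lemma. Fix the order in which the gates of $U$ are applied, say $U = g_s \cdots g_1$ with $g_1$ applied first, and recall that $L_A$ is obtained by the reverse sweep that maintains a set $S$ initialized to $A$: processing $g_s, g_{s-1}, \dots, g_1$ in turn, whenever $\mathrm{supp}(g_j)$ meets the current $S$ we enlarge $S$ by $\mathrm{supp}(g_j)$, and the final value of $S$ is $L_A$. Call $g_j$ \emph{relevant} if it triggered an enlargement and \emph{irrelevant} otherwise. Two facts do all the work. (i) A relevant gate is supported inside $L_A$, and an irrelevant gate is supported inside the complement of $A$ (the running $S$ always contains $A$, so an irrelevant gate cannot touch $A$). (ii) If an irrelevant gate $g_j$ shares a qudit with a \emph{later} gate $g_k$ ($k>j$), then $g_k$ is also irrelevant: otherwise $g_k$ would have been processed as relevant earlier in the reverse sweep, putting $\mathrm{supp}(g_k)$ — and hence the shared qudit — into $S$ before $g_j$ is examined, which would have forced $g_j$ to be relevant.

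First I would strip all irrelevant gates off of $U$. Take the latest irrelevant gate $g_j$; by (ii) no gate applied after it shares a qudit with it, so $g_j$ commutes with everything applied after it and $U = g_j\,\widetilde U$, where $\widetilde U$ is $U$ with $g_j$ removed. By (i), $g_j$ is a unitary supported inside the traced-out set $-A$, and for any operator $X$ and any unitary $g$ supported on $B\subseteq -A$ one has $\tr_{-A}\big((g\otimes I)X(g^\dagger\otimes I)\big)=\tr_{-A}(X)$ by cyclicity of the partial trace over $B$; hence $\tr_{-A}(U\psi U^\dagger)=\tr_{-A}(\widetilde U\psi\widetilde U^\dagger)$. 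Deleting $g_j$ changes neither the set of relevant gates nor $L_A$, so iterating leaves the product $U_{\mathrm{rel}}$ of the relevant gates and $\tr_{-A}(U\psi U^\dagger)=\tr_{-A}(U_{\mathrm{rel}}\psi U_{\mathrm{rel}}^\dagger)$. Running the same argument inside $U_{L_A}$ — whose extra gates, being supported in $L_A$ yet irrelevant, lie in $L_A\setminus A\subseteq -A$ — gives $\tr_{-A}\big(U_{L_A}(\psi_{L_A}\otimes\nu_{-L_A})U_{L_A}^\dagger\big)=\tr_{-A}\big(U_{\mathrm{rel}}(\psi_{L_A}\otimes\nu_{-L_A})U_{\mathrm{rel}}^\dagger\big)$.

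It then remains to identify the two reduced quantities. Since every relevant gate is supported in $L_A$, write $U_{\mathrm{rel}}=V\otimes I_{-L_A}$ with $V$ acting on $\mathcal H_{L_A}$, and factor $\tr_{-A}=\tr_{L_A\setminus A}\circ\tr_{-L_A}$. Tracing out $-L_A$ first turns the left side into $\tr_{L_A\setminus A}\big(V\,\tr_{-L_A}(\psi)\,V^\dagger\big)=\tr_{L_A\setminus A}(V\,\psi_{L_A}\,V^\dagger)$, and the right side into $\tr(\nu_{-L_A})\cdot\tr_{L_A\setminus A}(V\,\psi_{L_A}\,V^\dagger)$; these coincide because $\nu$ is a normalized maximally mixed state. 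The one step that needs care is fact (ii): the claim is precisely that the relevant gates are closed under ``precedes and shares a qudit with'', which is exactly what licenses commuting every irrelevant gate past all of the gates that follow it. Once that is pinned down, the rest is routine partial-trace algebra, so I expect (ii) and the associated lightcone bookkeeping to be the only real obstacle.
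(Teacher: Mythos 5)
The paper states this Fact without proof (it is invoked as a standard lightcone property), so there is no in-paper argument to compare against; judged on its own, your proof is correct and essentially the canonical one: strip the gates outside the causal cone by commuting each one past everything applied after it and cancelling it under the partial trace, then observe that what remains acts only on $L_A$, so the state on $-L_A$ enters only through its trace. Your key combinatorial step (ii) — that the gates failing to meet the running set $S$ are closed under ``precedes and shares a qudit with'' — is argued correctly, and the final reduction via $\tr_{-A}=\tr_{L_A\setminus A}\circ\,\tr_{-L_A}$ is routine, as you say.

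Two small points to tighten. First, your definition ``relevant if it triggered an enlargement'' should read ``relevant if $\mathrm{supp}(g_j)$ intersects the current $S$'': a gate whose support already lies entirely inside $S$ enlarges nothing, yet it may touch $A$, and under your literal wording it would be classified irrelevant, contradicting your claim (i) that irrelevant gates avoid $A$ and invalidating the cancellation under $\tr_{-A}$; with the intersection-based definition (which is what your sweep description actually implements) both (i) and (ii) hold as you argue. Second, when you ``run the same argument inside $U_{L_A}$'' you are implicitly using that the relevant/irrelevant classification and the final set $L_A$ are unchanged when passing from $U$ to $U_{L_A}$; this does follow — the gates removed are exactly gates not supported in $L_A$, all of which are irrelevant, and by your own observation deleting an irrelevant gate leaves the sweep, hence the classification, untouched — but it deserves the one sentence making it explicit. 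With those repairs the proof is complete.
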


Second, we notice that if a quantum circuit $U$ is close to identity overall, then the reduced action of the circuit on any region must also be close to identity. We will often use the contrapositive of this statement: if the reduced action of a circuit on any small region is far from identity, then the circuit overall is far from identity.

\begin{fact}
Let $U$ be a quantum circuit on $n$ qudits and let $a$ be a constant such that $d_\II(U) < a$. Then for all states $\psi$ and all regions $A$,
\begin{align}
    \norm{\tr_{-A}(U \psi U^\dagger) - \psi_A} \leq a.
\end{align}
\end{fact}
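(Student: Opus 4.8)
The plan is to observe that this fact does not depend on the circuit structure of $U$ at all: it holds for an arbitrary unitary and is purely a statement about how the global constraint $d_\II(U)<a$ propagates down to reduced states. First I would unpack the hypothesis: $d_\II(U)<a$ means there is a phase $e^{i\phi}$ with $\norm{U-e^{i\phi}\II}<a$ in operator norm. Next I would reduce to the case of a \emph{pure} witness $\psi=\ketbra{\psi}$: the map $\psi\mapsto\tr_{-A}(U\psi U^\dagger)-\psi_A$ is linear and the norm is convex, so if the bound holds for all pure states it holds for all $\psi$ by writing $\psi$ as a convex combination of pure states. (One may also assume $a<1$, since distances between states are bounded by $1$ and the claim is otherwise trivial.)

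For a pure state, the operator-norm bound gives $\norm{U\ket{\psi}-e^{i\phi}\ket{\psi}}<a$. Put $\ket{\psi'}:=e^{i\phi}\ket{\psi}$, which has the same density matrix $\psi$; then $U\ket{\psi}$ and $\ket{\psi'}$ are two unit vectors at Euclidean distance $<a$, and a one-line overlap computation upgrades this to a bound on the (trace) distance between the corresponding rank-one projectors: from $\norm{U\ket{\psi}-\ket{\psi'}}^2=2-2\,\realp\braket{\psi'|U|\psi}<a^2$ one gets $\abs{\braket{\psi'|U|\psi}}>1-a^2/2$, hence $1-\abs{\braket{\psi'|U|\psi}}^2<a^2$, and the left-hand side is exactly the squared trace distance between $U\psi U^\dagger$ and $\psi=\ketbra{\psi'}$. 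Finally, $\tr_{-A}$ is a quantum channel and trace distance does not increase under quantum channels (data processing), so applying this to the two states above and using $\tr_{-A}(\psi)=\psi_A$ gives $\norm{\tr_{-A}(U\psi U^\dagger)-\psi_A}<a$.

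Since the argument is elementary, there is no genuinely hard step; the only thing requiring care is norm bookkeeping --- which normalization $\norm{\cdot}$ denotes on states (trace distance $\tfrac12\tr\abs{\cdot}$ versus trace norm, differing by a factor of two) --- and handling the partial-trace step through the right contraction statement rather than a crude triangle inequality. If one prefers to avoid invoking data processing, the contraction $\norm{\tr_{-A}(\cdot)}\le\norm{\cdot}$ on state differences can instead be derived directly via a Cauchy--Schwarz estimate on $\tr_{-A}$ of rank-one operators $\ket{\alpha}\bra{\beta}$, applied to the decomposition $U\ketbra{\psi}U^\dagger-\ketbra{\psi'}=U\ket{\psi}\big(\bra{\psi}U^\dagger-\bra{\psi'}\big)+\big(U\ket{\psi}-\ket{\psi'}\big)\bra{\psi'}$, though this route loses a constant factor; either way, once the reduction to pure states is in place the rest is routine.
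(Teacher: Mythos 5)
Your argument is correct, and it is worth noting that the paper never actually supplies a proof of this fact: it is stated as a preliminary and used qualitatively (in contrapositive form) inside the proof of Theorem~\ref{finite-set-p-theorem}, so there is no in-paper argument to compare against --- your write-up fills that gap with exactly the kind of routine argument the authors evidently had in mind. The steps all check out: convexity reduces to pure $\psi$; $\norm{U-e^{i\phi}\II}<a$ gives $\norm{U\ket{\psi}-e^{i\phi}\ket{\psi}}<a$; the overlap computation gives $1-\abs{\braket{\psi'|U|\psi}}^2<a^2$, which is the squared trace distance (in the $\tfrac12\norm{\cdot}_1$ normalization) between the two rank-one projectors; and monotonicity of trace distance under the partial-trace channel finishes it, since $\tr_{-A}\psi=\psi_A$. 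On the norm bookkeeping you flag: the paper's $\norm{\cdot}$ in $d_\II$ is the operator norm, and the most natural reading of the fact is that the left-hand side is also an operator norm; this causes no trouble, because the difference of two density matrices is traceless Hermitian, and for such operators $\norm{X}\le\tfrac12\norm{X}_1$, so your trace-distance bound of $a$ already implies the operator-norm bound with the same constant --- only the unnormalized trace norm would cost the factor of $2$, and even that would be harmless for the theorem's use of the fact, since $a=o(1)$ there. One could shorten the middle step slightly by noting that for unit vectors the trace distance between $\ketbra{u}$ and $\ketbra{v}$ is $\sqrt{1-\abs{\braket{u|v}}^2}\le\norm{\ket{u}-\ket{v}}$, but this is cosmetic; your proof is complete and, as you observe, uses nothing about $U$ being a circuit.
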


\begin{proof_of}{\cref{finite-set-p-theorem}}
Let $\Cc(\ell, h)$ be the collection of all quantum circuits acting on $\ell$ qudits and of depth $\leq h$ consisting of gates only from $\Gg$. Let us define the \emph{increment-distance} $\eta_{\ell,h}$ as
\begin{align}
    \eta_{\ell,h} \defeq \min_{\substack{V \in \Cc(\ell, h) \\ V \neq e^{i\phi}\II}} d_{\II}(V).
\end{align}
Since $\Gg$ is a finite gate set and $\Cc(\ell, h)$ has a bounded cardinality, $\eta_{\ell,h} > 0$ is a well-defined constant independent on $n$ and represents the closest a circuit can be to being identity without being identity itself. This is formalized in the following fact.

\begin{fact}
Let $V$ be a circuit $\in \Cc(\ell, h)$ such that $d_{\II}(V) < \eta_{\ell, h}$. There exists an angle $\phi_V$ such that $V = e^{i\phi_V} \II$. 
\end{fact}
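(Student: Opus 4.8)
The plan is to obtain the statement as an essentially immediate consequence of the definition of $\eta_{\ell,h}$, the only genuine work being to check that this definition is legitimate: that the minimum defining $\eta_{\ell,h}$ ranges over a finite set and that $\eta_{\ell,h}$ is therefore a bona fide positive constant (or, in a degenerate case, $+\infty$, which only makes the claim easier). First I would note that $\Cc(\ell,h)$ has bounded cardinality: a circuit in $\Cc(\ell,h)$ is a sequence of at most $h$ layers, each layer a choice of a pairwise-disjoint collection of gates from the finite set $\Gg$ placed on the $\ell$ qudits, and there are only finitely many such choices. Hence the collection of unitaries $\{V \in \Cc(\ell,h) : V \neq e^{i\phi}\II \text{ for all } \phi\}$ is finite. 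If it is empty, then every circuit in $\Cc(\ell,h)$ already equals a global phase times the identity and the Fact holds trivially; so assume it is nonempty.

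Next I would show that $d_{\II}(V) > 0$ for each $V$ in that set. The map $\phi \mapsto \norm{V - e^{i\phi}\II}$ is continuous on the compact interval $[0,2\pi]$, so $d_{\II}(V)$ is attained at some $\phi^\star$; if $d_{\II}(V) = 0$ we would get $V = e^{i\phi^\star}\II$, contradicting membership in the set. Thus $d_{\II}(V) > 0$ for every such $V$, and as the minimum of a finite nonempty set of strictly positive numbers it is itself strictly positive and attained, i.e., $\eta_{\ell,h} > 0$. (In the application $\ell$ and $h$ will be the bounded lightcone parameters of a constant-depth circuit, so $\eta_{\ell,h}$ is genuinely a constant independent of $n$.)

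Finally the Fact follows by contraposition. Suppose $V \in \Cc(\ell,h)$ but $V \neq e^{i\phi}\II$ for every $\phi$. Then $V$ lies in the set over which $\eta_{\ell,h}$ is minimized, so $d_{\II}(V) \geq \eta_{\ell,h}$, contradicting the hypothesis $d_{\II}(V) < \eta_{\ell,h}$. Therefore some angle $\phi_V$ with $V = e^{i\phi_V}\II$ must exist, as claimed. I do not expect any real obstacle here; the only points that need a sentence of care are the finiteness of $\Cc(\ell,h)$, which is what makes the ``$\min$'' in the definition of $\eta_{\ell,h}$ legitimate and its value positive, and the compactness argument that $d_{\II}(V) = 0$ forces $V$ to be a scalar multiple of $\II$ — both entirely routine.
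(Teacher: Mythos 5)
Your proposal is correct and follows essentially the same route as the paper: the paper likewise treats the Fact as an immediate consequence of the definition of $\eta_{\ell,h}$ as a minimum over the finite set of non-identity circuits in $\Cc(\ell,h)$, with the finiteness of the gate set guaranteeing $\eta_{\ell,h}>0$. Your added care (attainment of the minimum over $\phi$ and the contrapositive step) simply fills in the routine details the paper leaves implicit.
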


In order to construct a $\P$ algorithm for this problem, we notice that since $a = o(1)$, for some sufficiently large $N_0$, if $n > N_0$
\begin{align}
    a(n) < \eta_{2^{t+1}, t}.
\end{align}
Our algorithm will solve only instance of this size or larger. Assume that an instance $U$ of the problem is a $\false$ instance, so  $U$ is near-identity. Then for each qubit $i$ and every state~$\psi$,
\begin{align}
    \norm{\tr_{-i}\left( U_{L_i} \left( \psi_{L_i} \otimes \nu_{-L_i} \right) U_{L_i}^\dagger \right) - \psi_i} \leq a
\end{align}
as a consequence of the prior stated facts. Since, this holds for all states $\psi$, then $d_{\II}(U_{L_i}) < a$. However, the circuit $U_{L_A}$ acts on at most $2^{t+1}$ qubits and has depth at most $t$. Since $a < \eta_{2^{t+1}, t}$, then we can conclude that the action of $U_{L_i}$ on the $i$th qubit must be $\II$ (up to phase) in every $\false$ instance. 
Since this holds for every qubit $i$, in a $\false$ instance, the circuit $U$ must exactly be $\II$ (up to phase). Therefore, the $\P$ algorithm is simple: test if each circuit $U_{L_i}$ is exactly identity and if so report $\false$ or otherwise report $\true$.
\end{proof_of}

\section*{Acknowledgments}
SB was supported in part by the IBM Research Frontiers Institute. CN was supported by NSF Quantum Leap Challenges Institute Grant number OMA2016245 and an IBM Quantum PhD internship. Part of this work was completed while CN and BO were participants in the Simons Institute for the Theory of Computing \emph{Summer Cluster on Quantum Compuatation}. 
Additionally, we thank Sam Gunn, Zeph Landau, Dimitri Maslov and Kristan Temme for insightful discussions.  

\bibliography{references}
\bibliographystyle{alpha}

\appendix

\section{Alternative algorithm to Theorem \ref{thm:main-algorithm-version}}
\label{sec:alternate-thm}

In Theorem \ref{thm:alternate-alg}, we give an alternative algorithm for solving the same problem as Theorem \ref{thm:main-algorithm-version}. While this algorithm will have a inferior worst-case runtime than Theorem \ref{thm:main-algorithm-version}, it may run significantly faster depending on the structure of the problem instance.
Furthermore, it has the added advantage that its runtime can be efficiently calculated in time $\poly(n,s)$. Therefore, one can quickly compute the runtime of \cref{thm:alternate-alg} and compare it to that of \cref{thm:main-algorithm-version} and run the faster algorithm. While not faster in a worst-case sense, it may prove optimal for many physical instances. In addition, Theorem \ref{thm:alternate-alg} can handle not just $T$ gates but all single-qubit phase gates $G = \diag(1, e^{i \theta})$ which may be an advantage for some problems\footnote{It is also easy to extend this algorithm to all $k$-qubit non-Clifford gates. However, the runtime will now scale as $4^{t + kt}$. This follows directly from the fact that any $k$-qubit non-Clifford gate can be expressed as the linear combination of $4^k$ Clifford gates.}. 
\footnote{\cref{thm:main-algorithm-version} can also be extended to general phase gates, but at the cost of potentially weaker upper bounds on the constant $\alpha$ in the exponent.}

\begin{theorem}
For $t \leq n$, there exists a classical algorithm for parametrized $\QCSAT$ instance with a single-qubit output register consisting of Clifford and phase gates $G = \diag(1, e^{i\theta})$, running in worst-case time $O(\poly(n,s)2^{3t} \log(t/\delta))$, which decides if $\Val > c$ or $<c - \delta$. Furthermore, there exists a classical $\poly(n,s)$ routine to calculate the runtime of this algorithm without running the algorithm itself.
\label{thm:alternate-alg}
\end{theorem}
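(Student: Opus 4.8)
The plan is to follow the same pipeline as in the proof of Theorem~\ref{thm:main-algorithm-version}, but replace the magic-state-injection / bipartite-stabilizer machinery with a direct linear-combination-of-Cliffords expansion, which is what lets the runtime be computed in advance. Since the output register is a single qubit, $\Val = \lambda_{max}(\rho)$ where $\rho = \bra{0^m}U^\dagger \ketbra{1}U\ket{0^m}$ acts on the $n$ witness qubits. First I would write each phase gate $G = \diag(1,e^{i\theta})$ as a linear combination of two Cliffords, e.g.\ $G = \alpha \II + \beta Z$ with $\alpha = (1+e^{i\theta})/2$, $\beta = (1-e^{i\theta})/2$. Substituting this for all $t$ non-Clifford gates and expanding gives $U = \sum_{x \in \bits^t} c_x U_x$ where each $U_x$ is a Clifford circuit and the $c_x$ are explicitly computable products of the $\alpha,\beta$'s; hence $\ketbra{1}U\ket{0^m} = \sum_{x,y} c_x \bar c_y\, \ketbra{1}U_x\ket{0^m}\bra{0^m}U_y^\dagger$. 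Conjugating by each Clifford turns each term into a product of a stabilizer projector against a stabilizer state, i.e.\ $\rho$ becomes a $4^t$-term linear combination of (generally non-Hermitian) rank-one-ish operators $\ketbra{1}U_x\ket{0^m}\bra{0^m}U_y^\dagger \Pi$, each of which is, after pushing Cliffords around, of the form (stabilizer state)(stabilizer state)${}^\dagger$ with a scalar overlap factor computable in $O(n^3)$ via stabilizer inner products~\cite{garcia2012efficient,bravyi2016improved}.

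Next I would show that $\Val$ can be extracted by a power-method-like iteration that multiplies this operator against a test vector, where the test vector is maintained implicitly as a short linear combination of stabilizer states. The key structural point is that $\rho$ is supported (as an operator) inside a stabilizer subspace of dimension at most $2^{O(t)}$: each surviving Clifford-conjugated term forces the postselection constraints from $\ket{0^m}$ and the output projector, so after restricting to the common stabilizer group the effective operator acts on a Hilbert space isomorphic to at most $O(t)$ qubits. Concretely one reduces $\lambda_{max}(\rho)$ to the largest eigenvalue of an explicit $2^{O(t)} \times 2^{O(t)}$ Hermitian matrix $M$, whose entries are computed by summing the $\le 4^t$ stabilizer-overlap contributions, each entry costing $\poly(n,s)$ to assemble the Clifford tableaux and $O(n^3)$ per inner product. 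Computing $\lambda_{max}(M)$ to additive precision $\delta$ is done by the power method on a $2^{O(t)}$-dimensional vector, costing $O(2^{O(t)}\log(1/\delta))$ matrix-vector multiplications of cost $2^{O(t)}$ each, i.e.\ $O(\poly(n,s)\,2^{3t}\log(t/\delta))$ overall once one tracks the exponents carefully (the $2^{3t}$ coming from $2^t$ rows times $2^t$ columns times $2^t$ from the sum, or from matrix-vector products on a $2^{O(t)}$-dimensional space — the stated bound fixes the constant in $O(t)$ at roughly $3t/2$ per dimension). To decide $\Val > c$ versus $\Val < c-\delta$ it suffices to run this with precision $\delta/2$.

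For the ``runtime computable in advance'' clause: since every dimension-determining quantity above — the size of the common stabilizer group of the postselected terms, the number of surviving Clifford branches after Pauli simplification, the ranks of the relevant projectors — is a function of the Clifford tableaux and the bit patterns $x,y$, all of which are obtained by $\poly(n,s)$ stabilizer manipulations, one can simply run the (classical, polynomial-time) ``dry run'' that computes these dimensions without ever forming $M$ or doing the power method, and from them read off the exact iteration count and per-iteration cost. I would package this as: a $\poly(n,s)$ preprocessing routine outputs an integer $T_{\mathrm{eff}} \le 2^{O(t)}$ (the effective dimension) and the claimed runtime is then a fixed formula in $T_{\mathrm{eff}}$, $n$, $s$, $\delta$.

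The main obstacle I expect is \emph{Hermiticity and sign bookkeeping} in the linear-combination expansion: unlike the magic-state approach, here $\rho = \sum_{x,y} c_x \bar c_y (\cdots)$ is a sum of non-Hermitian terms that only add up to something Hermitian and PSD, so one cannot apply the power method term-by-term naively — one must either (a) symmetrize and work with $\rho$ as a genuine operator restricted to its stabilizer support (requiring a careful argument that this support really is $2^{O(t)}$-dimensional and that one can compute an orthonormal stabilizer basis for it), or (b) reformulate $\Val$ as the largest squared singular value of $\Pi_{\mathrm{out}} U \ket{0^m}$ viewed as a map on the $2^{O(t)}$-dimensional branch space, which is cleaner but needs the branches to be handled coherently rather than incoherently. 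Getting the exponent exactly $2^{3t}$ (as opposed to $2^{4t}$ from the crude $4^t$-terms-times-$\poly$ count) will require exploiting that the power-method vector lives on only $\approx t$ effective qubits, not that the operator has $4^t$ terms; this is the quantitative crux. The rest — the explicit $\alpha,\beta$ coefficients, the $O(n^3)$ stabilizer inner products, the power-method convergence from~\cite{kuczynski1992estimating}, and the planar-graph-free, purely-computational dry-run — is routine given the tools already assembled in the paper.
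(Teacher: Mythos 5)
Your proposal takes a genuinely different route from the paper's. The paper works in the Heisenberg picture: it conjugates the single output observable $Z_1$ backwards through $U$, so the operator $Q^{(s)} = U^\dagger Z_1 U$ is \emph{manifestly Hermitian} at every stage. At each non-Clifford phase gate $R(\theta)$, the current Pauli either commutes through unchanged (if it acts as $\II$ or $Z$ on that qubit) or splits into two Pauli summands via \cref{eq:commutation-relations} (if it acts as $X$ or $Y$), giving at most $2^t$ Pauli terms. The crucial ingredient you are missing is \cref{lemma:short-basis}: these $\le 2^t$ Paulis are all generated (as a multiplicative group) by a basis of at most $t+1$ Pauli operators, and this basis can be built incrementally in $\poly(s)$ time. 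From the commutation pattern of the basis one then constructs an isomorphic set of Paulis on $\le t+1$ qubits, so that the effective Hermitian matrix $H'$ is $2^b\times 2^b$ with $b\le t+1$. This is where the $2^{3t}$ in the runtime comes from, and since $b$ is known after the $\poly(s)$ basis-building pass, the runtime is computable in advance.

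By contrast, you expand $U$ itself as a linear combination of $2^t$ Cliffords ($G=\alpha\II+\beta Z$), which forces you to work with the bilinear expansion $\rho=\sum_{x,y}c_x\bar c_y(\cdots)$, introducing $4^t$ non-Hermitian cross terms. You correctly identify that this is the quantitative and structural crux — the Hermiticity bookkeeping, the need to argue that $\rho$ has stabilizer support of dimension $2^{O(t)}$, and the gap between a naive $4^t$ bound and the claimed $2^{3t}$ — but you leave all three unresolved, offering only two candidate strategies without carrying either through. The paper's Heisenberg-picture approach sidesteps all three issues at once: Hermiticity is automatic, the term count is $2^t$ not $4^t$, and the small-basis lemma gives the exact effective dimension $2^b$ needed for the runtime and the dry-run runtime prediction. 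If you want to salvage the LCU route you would essentially need to re-derive a statement equivalent to \cref{lemma:short-basis} for the Clifford branches $U_x$, at which point you have recovered the paper's argument in a more roundabout form. As written, your proposal has a genuine gap precisely at the step you flag as the ``main obstacle.''
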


\begin{proof}

Let us notice that our goal is to compute the largest eigenvalue of $\bra{0^m} U^\dagger \ketbra{1}_1 U \ket{0^m}$ due to \cref{eq:qcsat-def}. Since $\ketbra{1} = \frac{\II}{2} - \frac{Z}{2}$, this is equivalent to computing the smallest eigenvalue of $\bra{0^m} U^\dagger Z_1 U \ket{0^m}$. In the case that $U$ is a completely Clifford circuit, $Q^{(s)} = U^\dagger Z_1 U$ is a Pauli matrix $Q^{(s)} = \alpha P_1 \otimes P_2 \otimes \ldots P_{n+m}$ for $\alpha \in \{\pm 1, \pm i\}$ and $P_i \in \{\II, X, Y, Z\}$. Therefore,
\begin{align}
    \bra{0^m} U^\dagger Z U \ket{0^m} = P_1 \otimes \ldots \otimes P_n \cdot \prod_{j = n+1}^{n+m} \ev{P_j}{0}.
\end{align}
Then, the smallest eigenvalue of this matrix is easy to calculate as it is in tensor product; this is effectively the Gottesman-Knill theorem \cite{gottesman-knill-theorem}.
Furthermore, the Pauli $P^{(s)}$ can be computed efficiently. More specifically, if $U = g_1 \ldots g_s$ with each gate $g_i$ a Clifford matrix, we can define and compute the sequence of Paulis $Q^{(k)} \defeq g_k Q^{(k-1)} g_k^\dagger$ for $Q^{(0)} = Z_1$ from $k = 1, \ldots, s$.

We now extend this algorithm to the case that $U$ contains $t$ non-Clifford gates. Consider first the case that there is one non-Clifford qubit rotation gate $g_k$, with $g_k$ 
\begin{align}
    g_k \defeq R(\theta_k) = \begin{pmatrix} 1 & 0 \\ 0 & e^{i \theta_k} \end{pmatrix}
\end{align}
and acts (without loss of generality) on the first qubit. Let $Q^{(k-1)}$ be the Pauli calculated up to gate $g_{k-1}$. Now notice that there are 2 cases to consider; namely if the action of $Q^{(k-1)}$ on the first qubit is $\in \{\II, Z\}$ or is $\in \{X, Y\}$. Since $R(\theta)$ commutes with $\II$ and $Z$ and the following commutation relations hold:
\begin{align}
    \quad R(\theta) X R(\theta)^\dagger = (\cos \theta) X + (\sin \theta) Y, \quad R(\theta) Y R(\theta)^\dagger = (-\sin \theta) X + (\cos \theta) Y,
    \label{eq:commutation-relations}
\end{align}
we can express 
\begin{align}
    Q^{(k)} = g_k Q^{(k-1)} g_k^\dagger = P^{(k,1)} + P^{(k,2)}
\end{align} 
where $P^{(k,1)}$ and $P^{(k,2)}$ are Pauli matrices scaled by a complex number. For every subsequent Clifford gate $g_{k'}$ we can then recursively define and compute
\begin{align}
    P^{(k',\ell)} \defeq g_{k'} P^{(k'-1,\ell)} g_{k'}^\dagger
\end{align}
which will remain a Pauli matrix. It is easy to note that this bifurcation from one Pauli matrix to two Pauli matrices when commuting past a non-Clifford gate generalizes to multiple non-Clifford gates with
\begin{align}
    U Z_1 U^\dagger = Q^{(s)} = \sum_{\ell = 1}^{\leq 2^t} P^{(s,\ell)}
    \label{eq:target-sum-in-linear-combo}
\end{align}
being expressible as the linear combination of $\leq 2^t$ Pauli matrices each acting on $n + m$ qubits. We next show that although there are $\leq 2^t$ Pauli matric, there exists an efficiently computable basis of size $b \leq t + 1$ such that each Pauli matrix can be expressed as a product of the basis matrices.

\begin{lemma}
Let $U$ be a quantum circuit consisting of $s$ gates of which at most $t$ gates are non-Clifford qubit rotation gates. Then $Q^{(s)} = U Z_1 U^\dagger$ can be expressed as a sum of $\leq 2^b$ Pauli matrices which are each products of at most $b \leq t + 1$ basis Pauli matrices. Furthermore, the basis can be computed in time $O(\poly(s))$ and the collection of Pauli matrices can be computed in time $O(2^b \cdot \poly(s))$.
\label{lemma:short-basis}
\end{lemma}
This lemma is proved after the description of the rest of the algorithm. Given the basis $\Bb = \{B^{(1)}, \ldots, B^{(b)}\}$ for $b \leq t + 1$, define $\gamma(k', k) \defeq 1$ if $B^{(k')}$ and $B^{(k)}$ commute and $ \defeq 0$ otherwise. Observe then the Pauli matrices
\begin{align}
    A^{(k)} \defeq \prod_{k' < k} X_{k'}^{\gamma(k',k)} \cdot Z_k
\end{align}
observe the same commutation relations as $\Bb$ do. However, $\Aa = \{A^{(1)},\ldots, A^{(b)}\}$ act on at most $b$ qubits. For each Pauli matrix $P^{(\ell)}$ defined as a product of elements from $\Bb$, let us define $O^{(\ell)}$ as the same product, except using the corresponding matrices from $\Aa$. Then the spectrum of 
\begin{align}
    H' \defeq \sum_{\ell = 1}^{\leq 2^t} O^{(\ell)}
\end{align}
is the same as that of $Q^{(s)}$ from \cref{eq:target-sum-in-linear-combo}. This is because there exists a unitary mapping $\Aa$ to $\Bb$ which therefore maps $O^{(\ell)}$ to $P^{(\ell)}$ and likewise maps $H$ to $Q^{(s)}$. Therefore, it suffices to compute the minimum eigenvalue of $H'$. Here $H'$ is a square matrix of dimension $2^b \times 2^b$ with $b \leq t + 1$. Computing $H'$ and its minimum eigenvalue to accuracy $\delta$ can be done in time $O(\poly(s) 2^{3b} \log(t/\delta))$. \\

Lastly, notice that a convenient quality of this algorithm is that the basis $\Bb$ and its size $b$ can be computed in time $O(\poly(s))$. Therefore, one can calculate $b$ and calculate\footnote{Namely, this is to check in time $O(\poly(s))$ if $b \ll t$ to see if this algorithm will be more efficient at computing $H'$ than the stabilizer-rank of magic states algorithm (Theorem \ref{thm:main-algorithm-version}) derived from \cite{best-stab-bound}.} the runtime of the algorithm without running the algorithm itself.
\end{proof} \\

\begin{proof_of}{\cref{lemma:short-basis}}
We proceed by induction on the gates $g_1, \ldots, g_s$ of $U$. Initially, the only basis matrix is $B^{(0,1)} = Z_1$ and $P^{(0,1)} = B^{(0,1)}$. Then  in the inductive step, we assume a basis of $\{B^{(k,\lambda)}\}$ and a collection of Pauli matrices $P^{(k, \ell)}$ such that each Pauli is expressible as a product of the basis matrices. When gate $g_k$ is a Clifford, we define $B^{(k-1,\lambda)} \defeq g_k B^{(k-1,\lambda)} g_k^\dagger$. Conveniently, $P^{(k,\ell)} = g_k P^{(k-1, \ell)} g_k^\dagger$ is the product of the corresponding set of new basis matrices. 

In the case that $g_k = R(\theta)$ which (without loss of generality) acts on the first qubit, we first transform the basis $\{B^{(k-1,\lambda)}\}$ by multiplying basis terms we ensure that at most 2 basis terms act non-trivially on the first qubit. The terms $P^{(k-1,\ell)}$ can be adjusted in polynomial-time to reflect the new basis. If there are no basis terms acting non-trivially or one basis term acting as $Z$, then we set $P^{(k,\ell)}$ equal to $P^{(k-1, \ell)}$. In the case that the one basis term (without loss of generality, $B^{(k-1,1)}$ acts as $X$, then we introduce a new basis term defined as $B^{(k, \text{new})} \defeq B^{(k,1)} \cdot XY$. Any Pauli term $P^{(k-1,\ell)}$ involving $B^{(k-1,1)}$ after commuting by $g_k$ now a linear combination of said term and $B^{(k, \lambda_{\text{new}})}$ according to \cref{eq:commutation-relations}. A similar argument holds when the one basis $B^{(k-1,1)}$ term acts as $Y$. In the case that two basis terms act non-trivially on the first qubit, we can also enforce that one of the basis terms acts as $Z$ and the other acts as either $X$ or $Y$. Then, a similar argument enforces that it suffices to introduce a single additional basis term. 

Therefore, at the end of the induction, the total basis has size at most $t + 1$ and $U Z_1 U^\dagger$ can be expressed as a linear combination of $\leq 2^t$ Pauli terms.
\end{proof_of}

\end{document}